\DeclareMathOperator{\rank}{rank}
\DeclareMathOperator{\Tr}{Tr}
\DeclareMathOperator{\spn}{span}
\DeclarePairedDelimiter{\norm}{\lVert}{\rVert}
\declaretheorem[numberwithin=section]{theorem}
\declaretheorem[sibling=theorem]{lemma}
\declaretheorem[sibling=theorem]{corollary}
\declaretheorem[sibling=theorem,name=Proposition]{proposition}
\declaretheorem[sibling=theorem]{remark}
\declaretheorem[numbered=no,style=definition,name=Fact]{fact*}
\newtheorem*{theorem*}{Theorem}
\newtheorem*{corollary*}{Corollary}
\newtheorem*{proposition*}{Proposition}
\newtheorem*{lemma*}{Lemma}
\newtheorem*{claim*}{Claim}
\newtheorem*{problem*}{Problem}
\newcommand{\Hilbert}{\mathcal{H}}
\newcommand{\underflow}[2]{\underset{\kern-60mm \overbrace{#1} \kern-60mm}{#2}}
\newcommand{\CZ}{U}
\newcommand{\CZZ}{CZZ}
\newcommand{\louis}[1]{{\color{green}{louis: #1}}}
\newcommand{\es}[1]{{\color{red}{Edgar: #1}}}
\newcommand{\esnew}[1]{{\color{blue}{Edgar New Text: #1}}}
\newcommand{\eric}[1]{{\color{red}{EC: #1}}}
\newcommand{\cgme}[1]{\mathcal{C}_{\text{GME}}(#1)}
\newcommand{\ngme}[1]{\mathcal{N}_{\text{GME}}(#1)}
\newcommand{\ggme}[1]{\mathcal{G}_{\text{GME}}(#1)}
\newcommand{\dya}[1]{\ket{#1}\bra{#1}}
\newcommand{\R}[1]{\ket{R_{#1}}}
\newcommand{\Rn}{\ket{R_{2n+1}}}
\renewcommand{\L}[1]{\ket{L_{#1}}}
\newcommand{\SLOCC}[2]{\ket{#1}\xrightarrow[]{{\scriptscriptstyle SLOCC}}\ket{#2}}
\newcommand{\mbf}{\mathbf}
\newcommand{\mbb}{\mathbb}
\newcommand{\op}[2]{|#1\rangle\langle #2|}
\newcommand{\ip}[2]{\langle #1| #2 \rangle}
\begin{document}
\title{
Tensor Rank and Other Multipartite Entanglement Measures of Graph States
}
\author{Louis Schatzki}
\email{louisms2@illinois.edu}
\affiliation{Department of Electrical and Computer Engineering, University of Illinois at Urbana-Champaign}
\affiliation{Illinois Quantum Information Science and Technology (IQUIST) Center, University of Illinois Urbana-Champaign}

\author{Linjian Ma}
\affiliation{Department of Computer Science, University of Illinois at Urbana-Champaign}
\affiliation{Illinois Quantum Information Science and Technology (IQUIST) Center, University of Illinois Urbana-Champaign}

\author{Edgar Solomonik}
\affiliation{Department of Computer Science, University of Illinois at Urbana-Champaign}
\affiliation{Illinois Quantum Information Science and Technology (IQUIST) Center, University of Illinois Urbana-Champaign}

\author{Eric Chitambar}
\email{louisms2@illinois.edu}
\affiliation{Department of Electrical and Computer Engineering, University of Illinois at Urbana-Champaign}
\affiliation{Illinois Quantum Information Science and Technology (IQUIST) Center, University of Illinois Urbana-Champaign}

\begin{abstract}
    Graph states play an important role in quantum information theory through their connection to measurement-based computing and error correction. Prior work has revealed elegant connections between the graph structure of these states and their multipartite entanglement content.  We continue this line of investigation by identifying additional entanglement properties for certain types of graph states.  From the perspective of tensor theory, we tighten both upper and lower bounds on the tensor rank of odd ring states ($\Rn$) to read  $2^{n}+1\leq\rank(\Rn)\leq 3\cdot 2^{n-1}$.  Next, we show that several multipartite extensions of bipartite entanglement measures are dichotomous for graph states based on the connectivity of the corresponding graph. Lastly, we give a simple graph rule for computing the $n$-tangle $\tau_n$.
\end{abstract}
\maketitle

\section{Introduction}
Entanglement is one of the defining properties of quantum systems \cite{Einstein1935} and has been recognized as a fundamental resource for quantum information processing \cite{Horodecki2009,Ekert1998}. A pure state is considered to be entangled if it cannot be written in the form $\ket{\psi} = \bigotimes_{i=1}^n\ket{\psi^{(i)}}$. Similarly, a mixed state is entangled if it cannot be written as $\rho = \sum_{k}p_k\bigotimes_{i=1}^n \rho_k^{(i)}$.

Quantifying the amount of entanglement in a quantum state is not always straightforward. For pure bipartite systems, the Schmidt decomposition and resulting spectra fully characterize the entanglement properties and transformations under local operations and classical communication (LOCC) \cite{Horodecki2009,Ekert_schmidt,Walter2016}. The Schmidt decomposition of a pure state takes the form $\ket{\psi} = \sum_i \sqrt{\mu_i}\ket{u_i}\ket{v_i}$, where $\mathbf{\mu}=\{\mu_i\}$ are the Schmidt coefficients and $\{\ket{u_i}\}$ and $\{\ket{v_i}\}$ are sets of orthogonal states. Further, a variety of entanglement measures are known, i.e. functionals $E(\rho)$ that are non-increasing (on average) under LOCC and $E(\rho) = 0 $ if $\rho$ is a separable state \cite{Vidal_entmono}. Examples include the entanglement of formation \cite{Bennett_Ef}, distillable entanglement \cite{Rains_Ed},  negativity \cite{Vidal_neg, Lee_ms_neg}, geometric measure \cite{Wei_geomeas}, and concurrence \cite{Wootters_conc,Mintert_conc}. However, the picture grows significantly more complicated when considering multipartite entanglement, as we discuss below.

In this work we consider the amount and form of multipartite entanglement that arises in a class of quantum states known as graph states. These are of particular interest due to their application in measurement based quantum computing \cite{Raussendorf2001,Hein_app_ent_gs}, error correction codes \cite{Schlingemann_graph_EC}, secret sharing \cite{Markham_gs_secret}, and stabilizer computation simulation \cite{Anders_gs_simulation}. Further, by studying the entanglement properties of graph states, we actually quantify the entanglement of the larger set of stabilizer states. This follows from the fact that every stabilizer state is equivalent under local unitaries to at least one graph state \cite{Nest2004_loc_comp, Schlingemann_stab_graph}. As entanglement measures are invariant under local unitaries, one thus need only consider graph states to analyze all stabilizer states.

In this work we focus on the tensor rank and the GME concurrence, negativity, and geometric measures of entanglement.  Our main contributions to the study of graph states are twofold.  First, we consider ring states of $2n+1$ qubits and sharpen the bound on tensor rank from $2^{n}\leq\rank(\Rn)\leq 2^{n+1}$  to $2^{n}+1\leq\rank(\Rn)\leq 3\cdot 2^{n-1}$.  While this may seem like incremental progress, we stress that computing the tensor rank is a very challenging problem, and any progress in this direction is noteworthy.  Indeed, the analysis we employ goes beyond the bipartite bounding techniques of previous approaches.  This work thus contributes to the steadily growing research on the tensor rank of multipartite entangled states \cite{eisert2001schmidt, Chitambar_rank, Chen2010_entcost, Yu-2014a, Vrana-2015a, Vrana-2017a, Chen-2017a, Christandl-2018a, Bruzda-2019a}. Operationally, the improved bounds help better characterize the amount of entanglement needed to generate ring states using LOCC. Second, we study the GME concurrence, negativity, and geometric measure for general graph states.  These are shown to be sharply dichotomous and having a constant value for all connected graphs.  Before presenting our results, we briefly review the main concepts considered in this paper.

\subsection{Schmidt Measure and Tensor Decomposition}

Any $N$-party pure state $\ket{\psi} \in \Hilbert^{(1)} \otimes \cdots \otimes \Hilbert^{(N)}$  can be represented as 
\begin{equation}\label{eq:cpd}
    \ket{\psi} = \sum_{i=1}^R \mu_i\ket{\psi_i^{(1)}}\otimes \cdots \otimes \ket{\psi_i^{(N)}},
\end{equation}
where each $\ket{\psi_i^{(j)}}\in \Hilbert^{(j)}$. 
When $\ket{\psi}$ is viewed as an $N$-dimensional tensor, \cref{eq:cpd} is also known as a canonical polyadic (CP) tensor decomposition~\cite{hitchcock1927expression,harshman1970foundations}.
 The CP rank $r = \rank (\ket{\psi})$ of a tensor is defined as the smallest $R$ such that \eqref{eq:cpd} can be satisfied.  The CP rank is also known as the tensor rank, and we will use both types of terminology throughout this paper. In general, finding the CP rank of a tensor is NP-hard~\cite{haastad1989tensor}.

Different from the matrix case, for tensors the best rank-$R$ approximation may not exist. And there exists tensors that can be approximated arbitrarily well by rank-$R$ tensors where $R<\rank(\ket{\psi})$.  
In this case, \textit{border rank}~\cite{bini2007role,bini1979n2} is defined as the minimum number of rank-one tensors that are sufficient to approximate the given tensor with arbitrarily
small error.

The tensor rank is a \textit{bona fide} entanglement measure \cite{eisert2001schmidt} that is particularly useful studying state transformations under stochastic local operations and classical communication (SLOCC). These are transformations such that $\SLOCC{\psi}{\phi}$ with some non-zero probability (and is thus a generalization of LOCC). It is known that if $\SLOCC{\psi}{\phi}$, then $\rank{\ket{\psi}} \geq \rank{\ket{\phi}}$ \cite{Chitambar_rank}. Note that we can characterize SLOCC equivalence (i.e. $\SLOCC{\psi}{\phi}$ and $\SLOCC{\phi}{\psi}$) via invertible operators:
\begin{equation}\label{eq:slocc_equiv}
    \ket{\psi} = A_1\otimes A_2 \otimes \ldots \otimes A_n \ket{\phi},
\end{equation}
implying that $\rank (\ket{\psi}) = \rank (\ket{\phi})$. Lastly, we note that the tensor rank relates to entanglement cost. In particular, a generalized d-dimensional GHZ state (or any equivalent state) can be converted to an arbitrary state $\ket{\psi}$ iff $d\geq \rank(\ket{\psi})$ via SLOCC \cite{Chen2010_entcost}.  This provides an operational meaning to the tensor rank in terms of the entanglement resources needed to build $\ket{\psi}$ using GHZ states in the distributed setting.

\subsection{Measures of Genuine Multipartite Entanglement}

An $N$-partite pure state $\ket{\psi}$ is said to have genuine multipartite entanglement (GME) if it is not a product state under any bipartition $A|\overline{A}$ \cite{Vicente_gme,Dai_gme}; i.e. $\ket{\psi}\not=\ket{\alpha}\otimes\ket{\beta}$, where $\ket{\alpha}$ is held by parties in $A$ and $\ket{\beta}$ is held by parties $\overline{A}$.  States for which $\ket{\psi}=\ket{\alpha}\otimes\ket{\beta}$ are called biseparable, and in general it may be desirable for a multipartite entanglement measure to capture how close a state is to being biseparable. Accordingly, one can define measures via minimization over all possible bipartitions. That is, given some bipartite entanglement measure $E(\ket{\phi}),\ \ket{\phi}\in\mathcal{H}_A\otimes\mathcal{H}_B$, we take the multipartite extension to be $\min_{A}E_A(\ket{\psi})$, where $E_A$ is the measure $E$ evaluated according to partition $A|\overline{A}$.  Note that $E(\ket{\psi})=0$ if $\ket{\psi}$ is biseparable according to some partition. Thus, this multipartite extension is faithful with respect to GME. In this work we consider, beyond tensor rank, the following:
\begin{itemize}
    \item GME concurrence~\cite{Ma_gme_conc},
\begin{equation}\label{eq:gme_conc}
    \cgme{\rho} = \min_{A}\sqrt{2(1-\Tr[\rho_A^2])},
\end{equation}
\item GME negativity~\cite{Vidal_neg},
\begin{equation}\label{eq:gme_neg}
    \ngme{\rho} = \min_{A}\frac{1}{2}\left(\norm{\rho^{T_A}}_1-1\right),
\end{equation}
\item and GME geometric measure~\cite{Dai_gme},
\begin{equation}\label{eq:gme_geo}
    \ggme{\rho} = \min_{A}\left(1 -\max_i\mu_i\right),
\end{equation}
\end{itemize}
where $\norm{\cdot}_1$ denotes the Schatten 1-norm and $\rho^{T_A}=\mathbb{I}_{\overline{A}}\otimes T_A(\rho)$ is the partial transpose, and $\mu_i$ are the Schmidt coefficients from the Schmidt decomposition according to partition $A|\overline{A}$. The geometric measure for bipartite systems takes this form, but this is different than the general definition \cite{Wei_geomeas}.  


Lastly, we also evaluate the $n$-tangle $\tau_n$ \cite{Coffman2000_tangle,Wong2001_ntangle} on graph states. For pure states of even numbers of qubits, this is defined as 
\begin{align}\label{eq:n_tangle}
    \tau_n(\ket{\psi}) = \lvert\braket{\psi | \tilde{\psi}}\rvert^2,
\end{align}
where $\ket{\tilde{\psi}} = \sigma_y^{\otimes n}\ket{\psi^*}$ and $\ket{\psi^*}$ indicates the complex conjugate. 
The $n$-tangle is the square of a quadratic SLOCC invariant and can thus be used to distinguish between types of multipartite entangled states \cite{Li_nt_slocc}.

\subsection{Graph States}
Graph states are quantum states corresponding to some graph $G=(V,E)$, where $V$ is the vertex set and $E$ is the edge set with corresponding adjacency matrix $\Gamma$~\cite{Briegel2001, Raussendorf2001}.  There are two equivalent ways to think of graph states. 
The first is operational in the sense that it provides a formula for preparing the state given a graph:
\begin{equation}\label{eq:graph_state}
    \ket{G} = \prod_{(a,b)\in E}U^{(a,b)}\ket{+}^{\otimes |V|},
\end{equation}
where 
\begin{equation}
    U^{(a,b)} = \dya{0}^{(a)}\otimes\mathbb{I}^{(b)}+\dya{1}^{(a)}\otimes\sigma_z^{(b)}
\end{equation}
is a controlled $Z$ operation on qubits $a$ and $b$, and 
\[
\ket{+}=\frac{1}{\sqrt{2}}(\ket{0}+\ket{1}), \quad
\ket{-}=\frac{1}{\sqrt{2}}(\ket{0}-\ket{1})
\]
forms the Hadamard basis.  Thus, given a graph, the graph state initialize $|V|$ qubits in the state $\ket{+}^{\otimes|V|}$ and, for each edge, apply a controlled Z between the corresponding qubits.

Graph states can be equivalently thought of as stabilizer states \cite{hein2004multiparty}. Here the stabilizers are $S_a = \sigma_x^{(a)}\prod_{b\in N_a}\sigma_z^{(b)}$, where $N_a$ is the neighborhood of vertex $a$. As there are $|V|$ qubits and stabilizers, $\ket{G}$ is the unique state stabilized by all $S_a$.

\begin{figure}[t]
\includegraphics[width=0.9\columnwidth]{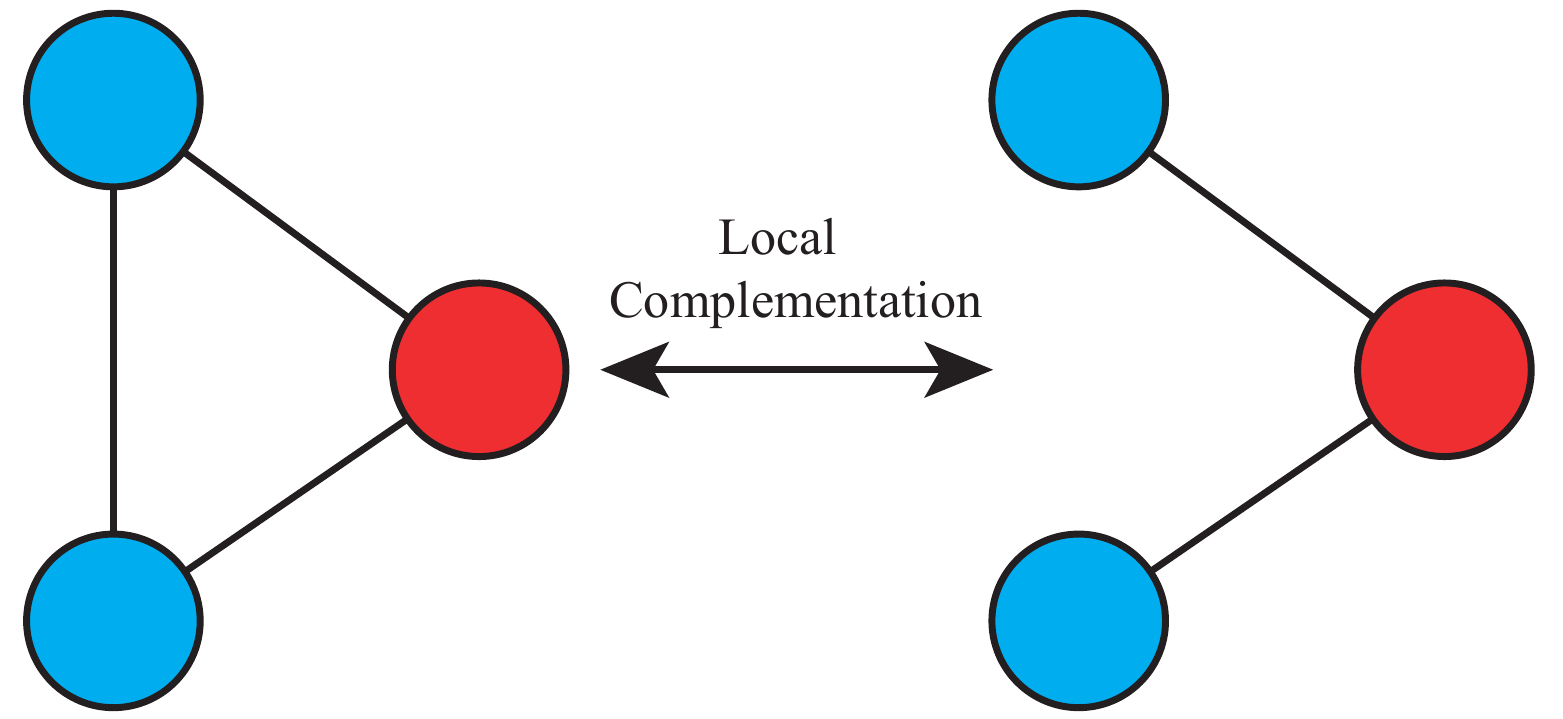}
\caption{Example of local complementation. Here the rule is applied to the red vertex, adding or removing the edge connecting the two other vertices.}
\label{fig:loc_comp}
\end{figure}

Also note that a basis for $\mathcal{H}=\bigotimes_{i=1}^n\mathcal{H}^{(i)}_2$ can be constructed given a graph $G$ \cite{Hein_app_ent_gs}:
\begin{equation}\label{eq:graph_basis}
    \ket{G_\textbf{s}} = \sigma_z^{\textbf{s}}\ket{G} = \prod_{(a,b)\in E}U^{(a,b)}\bigotimes_{i=1}^n(\sigma_z^{(i)})^{s_i}\ket{+}^{(i)}.
\end{equation}
It is clear that there are $2^n$ such orthogonal states and thus they form a basis. Further, one can think of $\textbf{s}$ as flipping the eigenvalues of stabilizers $S_a$ from $+1$ to $-1$. Going forward, we will denote these as graph basis states. We will later use a result from \cite{hein2004multiparty} that the partial trace of a graph state can be expressed in the graph basis:
\begin{align}\label{eq:g_ptrace}
    &\Tr_A[\ket{G}\bra{G}] = \nonumber\\ &\frac{1}{2^{|A|}}\sum_{\textbf{z}\in\mathbb{F}_2^{|A|}}U(\textbf{z})\ket{G-A}\bra{G-A}U(\textbf{z})^\dagger,
\end{align}
where $U(\textbf{z}) = \prod_{a\in A}(\prod_{b\in N_a} \sigma_z^{(b)})^{z_a}$, and $\ket{G-A}$ denotes the state corresponding to deleting all vertices in $A$ from $G$.  Further analysis of this state leads to the useful structural fact.
\begin{lemma}[\cite{hein2004multiparty}]
\label{thm:rdm}
    The states $U(\mbf{z})\ket{G-A}$ satisfy the orthogonality condition
    \begin{align}
        &\bra{G-A}U^\dagger(\mbf{z}')U(\mbf{z})\ket{G-A}=
        \nonumber\\
        &\begin{cases} 0\;\text{if $\mbf{z}-\mbf{z'}\in \text{\upshape ker}(\Gamma_{A\overline{A}})$} \\
        1\;\text{if $\mbf{z}-\mbf{z'}\not\in \text{\upshape ker}(\Gamma_{A\overline{A}})$}\end{cases},
    \end{align}
    where $\Gamma_{A\overline{A}}$ is the submatrix of $\Gamma_G$ restricted to edges from $A$ to $\overline{A}$.  Hence, $\rho^{(\overline{A})}=\Tr_A\op{G}{G}$ is maximally mixed over a subspace of dimension $2^d$, where $d=\text{\upshape rank}(\Gamma_{A\overline{A}})$. 
\end{lemma}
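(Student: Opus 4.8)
The plan is to reduce the claimed overlap to the orthonormality of the graph basis states \eqref{eq:graph_basis} of the reduced graph $G-A$. First I would note that, acting on the qubits of $\overline{A}$, the operator $U(\mathbf{z})=\prod_{a\in A}\bigl(\prod_{b\in N_a}\sigma_z^{(b)}\bigr)^{z_a}$ is a product of commuting Hermitian Paulis, so it equals $\sigma_z^{\mathbf{w}(\mathbf{z})}$, where the exponent on qubit $b\in\overline{A}$ is $w_b(\mathbf{z})=\sum_{a\in A:\,b\in N_a}z_a=(\Gamma_{\overline{A}A}\mathbf{z})_b \bmod 2$, with $\Gamma_{\overline{A}A}$ the $|\overline{A}|\times|A|$ block of the (symmetric) adjacency matrix; only neighbors lying in $\overline{A}$ matter, since the qubits of $A$ have been traced out. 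Because each $\sigma_z^{(b)}$ squares to the identity, $U^\dagger(\mathbf{z}')U(\mathbf{z})=\sigma_z^{\mathbf{w}(\mathbf{z})-\mathbf{w}(\mathbf{z}')}=\sigma_z^{\Gamma_{\overline{A}A}(\mathbf{z}-\mathbf{z}')}$.

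Next I would apply this identity to $\ket{G-A}$. By the definition in \eqref{eq:graph_basis}, $\sigma_z^{\mathbf{s}}\ket{G-A}=\ket{(G-A)_{\mathbf{s}}}$ is a graph basis state of $G-A$, and these $2^{|\overline{A}|}$ states are mutually orthonormal. Hence
\[
\bra{G-A}U^\dagger(\mathbf{z}')U(\mathbf{z})\ket{G-A}=\ip{G-A}{(G-A)_{\Gamma_{\overline{A}A}(\mathbf{z}-\mathbf{z}')}}=\delta_{\Gamma_{\overline{A}A}(\mathbf{z}-\mathbf{z}'),\,\mathbf{0}},
\]
which equals $1$ precisely when $\mathbf{z}-\mathbf{z}'\in\ker(\Gamma_{\overline{A}A})$ and $0$ otherwise. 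Under the transpose identification $\Gamma_{A\overline{A}}=\Gamma_{\overline{A}A}^{\mathsf T}$ this nullspace is the $\ker(\Gamma_{A\overline{A}})$ appearing in the statement, and it has dimension $|A|-d$ with $d=\rank(\Gamma_{A\overline{A}})=\rank(\Gamma_{\overline{A}A})$. (Taking $\mathbf{z}=\mathbf{z}'$ gives $1$, consistent with $\ip{G-A}{G-A}=1$; I would double-check the labeling of the two cases in the statement against this.)

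For the ``hence'' clause I would substitute back into \eqref{eq:g_ptrace}. The map $\mathbf{z}\mapsto U(\mathbf{z})\ket{G-A}$ is constant on cosets of $\ker(\Gamma_{\overline{A}A})$ in $\mathbb{F}_2^{|A|}$, so it takes $2^{|A|}/2^{|A|-d}=2^{d}$ distinct values $\ket{\phi_1},\dots,\ket{\phi_{2^d}}$, each attained by exactly $2^{|A|-d}$ choices of $\mathbf{z}$; by the orthogonality just established the $\ket{\phi_j}$ are orthonormal. Thus \eqref{eq:g_ptrace} collapses to $\rho^{(\overline{A})}=\frac{1}{2^{d}}\sum_{j=1}^{2^{d}}\op{\phi_j}{\phi_j}$, i.e.\ the maximally mixed state on a $2^{d}$-dimensional subspace.

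Honestly there is no deep obstacle here; the work is essentially bookkeeping. The one step demanding care is the first: correctly recognizing $U(\mathbf{z})|_{\overline{A}}=\sigma_z^{\Gamma_{\overline{A}A}\mathbf{z}}$ and that only the $A$-to-$\overline{A}$ edges enter, then matching $\ker(\Gamma_{\overline{A}A})$ with the rank of $\Gamma_{A\overline{A}}$ via symmetry of $\Gamma$. Everything else follows from orthonormality of the graph basis \eqref{eq:graph_basis} and a coset count.
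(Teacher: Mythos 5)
Your proof is correct, and it is worth noting that the paper itself offers no proof of this lemma at all---it is imported verbatim from Ref.~\cite{hein2004multiparty}---so your derivation fills a gap rather than duplicating one. The argument you give is the standard one: push $U(\mbf{z})$ onto $\overline{A}$ as $\sigma_z^{\Gamma_{\overline{A}A}\mbf{z}}$ (correctly discarding the $\sigma_z$ factors on neighbors inside $A$, which act on traced-out qubits), invoke orthonormality of the graph basis of $G-A$ from Eq.~\eqref{eq:graph_basis}, and finish the ``hence'' clause with a coset count over $\ker(\Gamma_{\overline{A}A})$. Your instinct to double-check the labeling of the two cases is well founded: as printed, the lemma has the cases interchanged. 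Setting $\mbf{z}=\mbf{z}'$ gives $\mbf{z}-\mbf{z}'=\mbf{0}\in\ker(\Gamma_{A\overline{A}})$ while the overlap is $\ip{G-A}{G-A}=1$, so the overlap must be $1$ when $\mbf{z}-\mbf{z}'$ lies in the kernel and $0$ otherwise, exactly as your $\delta_{\Gamma_{\overline{A}A}(\mbf{z}-\mbf{z}'),\mbf{0}}$ says; this corrected reading is also the one the rest of the paper implicitly uses (e.g.\ in Lemma~\ref{Lem:basis-support}, where the kernel is trivial and all the $\ket{e_{\mbf{z}}}$ are orthonormal). The only cosmetic point is that $\ker(\Gamma_{A\overline{A}})$ and $\ker(\Gamma_{\overline{A}A}^{\mathsf{T}})$ live in $\mathbb{F}_2^{|A|}$ only under the convention that $\Gamma_{A\overline{A}}$ maps bit strings indexed by $A$ to bit strings indexed by $\overline{A}$; the ranks agree regardless, so the dimension count $2^d$ is unaffected.
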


\subsection{Existing Tensor Rank Bounds for Graph States}
Here we briefly review existing results on graph state CP rank/Schmidt measure. From \cite{hein2004multiparty} we have that 
\begin{equation}\label{eq:gen_rk_lb}
    \rank(\ket{\psi}) \geq 2^{(\rank{\Gamma_{A\overline{A}})/2}},
\end{equation}
where $\Gamma_{A\overline{A}}$ is the subset of the adjacency matrix restricted to edges from $\overline{A}$ to $A$.  The authors also give a general case upper bound
\begin{equation}\label{eq:gen_rk_ub}
    \rank(\ket{\psi}) \leq 2^{\tau(G)},
\end{equation}
where $\tau(G)$ is the size of the smallest vertex cover of $G$.

\begin{figure}[t]
   \includegraphics[width=0.9\columnwidth]{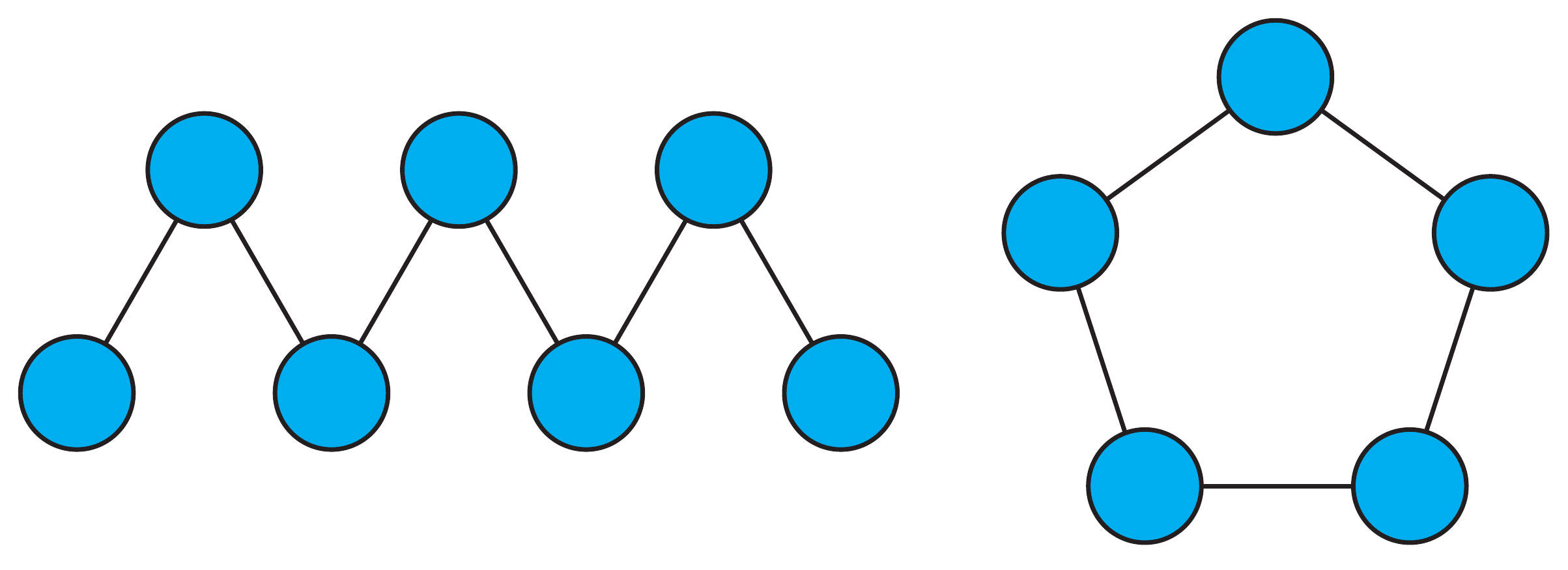}
    \caption{Line and ring graphs.
    The left graph is a line (one dimensional cluster state) on 7 qubits, which we denote by $\L{7}$. The right graph is an odd ring on 5 qubits, which we denote by $\R{5}$.}
    \label{fig:line_ring}
\end{figure}

While these bounds may not be tight, it is often possible to use complementation rules to find locally equivalent graphs for which these bounds improve.
It is known that the full orbit of any graph state under local clifford operations can be found via local complementations \cite{hein2004multiparty, Nest2004_loc_comp}. That is, for some vertex $a\in V$, complement the subgraph given by the neighborhood $N_a$ (Fig.~\ref{fig:loc_comp}). These rules have been used to classify all graph states of up to 8 qubits \cite{hein2004multiparty,Adcock2020_graph_ranks, Cabello2009_8q_schmidt}. Further, classes of two-colorable graphs corresponding to states of maximal schmidt measure are known \cite{Severini2006_2colorable}. However, odd rings, corresponding to non two-colorable graphs, lead to loose bounds.

Line states (Fig.~\ref{fig:line_ring}), also known as one-dimensional cluster states, are those with one-dimensional nearest neighbor connections. We will write $\L{n}$ to denote a line state on $n$ qubits. We will find line states to be useful in proving an upper bound on the rank of ring states. An explicit construction of a minimal CP decomposition of line states is given in the appendix.
\begin{lemma}
    \begin{align}
    \rank(\L{n}) = 2^{\lfloor \frac{n}{2} \rfloor}.
    \end{align}
\end{lemma}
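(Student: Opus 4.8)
The plan is to prove the two matching inequalities $\rank(\L{n})\geq 2^{\lfloor n/2\rfloor}$ and $\rank(\L{n})\leq 2^{\lfloor n/2\rfloor}$ separately. The lower bound I would extract from the Schmidt rank of one well-chosen bipartition; the upper bound from an explicit recursive CP decomposition that, on unrolling, gives the minimal decomposition deferred to the appendix.

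\textbf{Lower bound.} The tensor rank of any $\ket\psi$ is at least the Schmidt rank of $\ket\psi$ across \emph{any} bipartition $A\,|\,\overline A$: in a CP decomposition with $R$ terms, grouping the tensor factors on either side of the cut exhibits $\ket\psi$ as a sum of $R$ product vectors, so the $A\,|\,\overline A$ flattening has matrix rank $\leq R$ (\cf\eqref{eq:gen_rk_lb}). I would take $A$ to be the odd-labeled vertices of the path $1-2-\cdots-n$ and $\overline A$ the even-labeled ones. Since a path is bipartite with respect to this partition, every edge crosses the cut, so $\Gamma_{A\overline A}$ is the full biadjacency matrix; an elementary computation shows that over $\mathbb{F}_2$ this matrix is (after the obvious reordering) bidiagonal with $\lfloor n/2\rfloor$ linearly independent rows, hence $\rank(\Gamma_{A\overline A})=\lfloor n/2\rfloor$. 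By \cref{thm:rdm}, $\Tr_A\op{L_n}{L_n}$ is maximally mixed on a $2^{\lfloor n/2\rfloor}$-dimensional subspace, so the Schmidt rank of $\L{n}$ along this cut equals $2^{\lfloor n/2\rfloor}$, giving $\rank(\L{n})\geq 2^{\lfloor n/2\rfloor}$.

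\textbf{Upper bound.} Here I would peel off the two endpoint qubits of the path. Controlled-$Z$ gates commute, so $\L{n}=U^{(1,2)}U^{(2,3)}\bigl(\ket{+}^{(1)}\otimes\ket{+}^{(2)}\otimes\L{n-2}^{(3,\dots,n)}\bigr)$, where $\L{n-2}$ denotes the line state on vertices $3,\dots,n$. Applying $U^{(2,3)}$ and then $U^{(1,2)}$, using $U(\ket{+}\otimes\ket\chi)=\tfrac{1}{\sqrt2}\bigl(\ket{0}\otimes\ket\chi+\ket{1}\otimes\sigma_z\ket\chi\bigr)$ for the qubit in the control role together with $\sigma_z\ket{\pm}=\ket{\mp}$ and $\sigma_z\ket{0}=\ket{0}$, $\sigma_z\ket{1}=-\ket{1}$, one obtains
\begin{equation*}
  \L{n}=\tfrac{1}{\sqrt2}\Bigl(\ket{+}^{(1)}\ket{0}^{(2)}\otimes\L{n-2}+\ket{-}^{(1)}\ket{1}^{(2)}\otimes\sigma_z^{(3)}\L{n-2}\Bigr).
\end{equation*}
Across the cut $\{1,2\}\,|\,\{3,\dots,n\}$ each summand factors as a product state on qubits $1,2$ tensored with a state on qubits $3,\dots,n$ that is local-unitarily equivalent to $\L{n-2}$ (for the second summand, the graph-basis state $\sigma_z^{(3)}\L{n-2}$; \cf\eqref{eq:graph_basis}), so by invariance of tensor rank under local unitaries (\cf\eqref{eq:slocc_equiv}) each has tensor rank $\rank(\L{n-2})$. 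Superposing the two gives $\rank(\L{n})\leq 2\,\rank(\L{n-2})$. With the base cases $\rank(\L{1})=1$ (a single $\ket{+}$) and $\rank(\L{2})=2$ ($\L{2}$ is a Bell pair up to local unitaries), induction in steps of two yields $\rank(\L{n})\leq 2^{\lfloor n/2\rfloor}$; iterating the displayed identity also writes $\L{n}$ explicitly as a sum of $2^{\lfloor n/2\rfloor}$ product states indexed by a bit string $\mbf{x}\in\{0,1\}^{\lfloor n/2\rfloor}$. Combining the bounds gives $\rank(\L{n})=2^{\lfloor n/2\rfloor}$.

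The main obstacle is the upper-bound recursion: one has to commute the $CZ$ gates correctly, track which qubit acts as control at each step, and argue that the single-qubit $\sigma_z$ appearing on qubit $3$ leaves the tensor rank unchanged. The lower bound is essentially routine once \cref{thm:rdm} is invoked, modulo the elementary $\mathbb{F}_2$-rank computation for the biadjacency matrix of a path.
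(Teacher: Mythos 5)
Your proposal is correct, and both halves land on the quantities the paper has in mind, but you arrive there by a partly different and more self-contained route. The paper's proof of this lemma is a one-line deferral to the graph-theoretic bounds of Hein et al.: the lower bound from the $\mathbb{F}_2$-rank of $\Gamma_{A\overline{A}}$ and the upper bound $2^{\tau(G)}$ from the minimal vertex cover, which for the path $P_n$ is $\tau(P_n)=\lfloor n/2\rfloor$ (take the even-labeled vertices). Your lower bound is essentially the paper's: you pick the same odd/even cut, compute $\rank_{\mathbb{F}_2}(\Gamma_{A\overline{A}})=\lfloor n/2\rfloor$, and invoke \cref{thm:rdm} plus the flattening argument; note that you derive the Schmidt-rank bound directly rather than quoting \eqref{eq:gen_rk_lb}, which as printed carries a factor $1/2$ in the exponent and would not suffice on its own, so your version is actually the cleaner one to cite. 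Your upper bound departs from the vertex-cover argument: the recursion $\rank(\L{n})\leq 2\,\rank(\L{n-2})$ obtained by peeling off qubits $1,2$ and using invariance of tensor rank under the local $\sigma_z^{(3)}$ is correct (the commutation of the $CZ$ gates and the control bookkeeping check out), and it is essentially the explicit minimal decomposition the paper defers to the appendix (\cref{rmk:line}), which is then reused for the ring upper bound. What your route buys is a constructive decomposition with exactly $2^{\lfloor n/2\rfloor}$ product terms rather than only a cardinality bound; what the vertex-cover route buys is brevity. Either way the two bounds meet and the lemma follows.
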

\begin{proof}
    This readily follows from the mentioned graph theoretic tools. See \cite{hein2004multiparty} for details.
\end{proof}

For any even ring $\ket{R_{2n}}$, it's known that the lower bound equals the upper bound, thus the CP rank is $2^n$ \cite{hein2004multiparty}.
For any odd ring $\Rn$, it is known that $2^n \leq \rank (\Rn) \leq 2^{n+1}$, coming from the rank of the adjacency matrix and minimal vertex cover. Any tightening of these bounds will therefore require a new type of analysis not based on the latter graph-theoretic concepts.

\section{The Tensor Rank of Ring States}
In this paper we improve the odd ring CP rank bounds as stated in the following theorem.
\begin{theorem}\label{thm:odd_rank}
    The CP rank of the graph state corresponding to any odd ring $\Rn$, is bounded by
    \begin{equation}
    2^{n}+1\leq\rank(\Rn)\leq 3\cdot2^{n-1}.
    \end{equation}
\end{theorem}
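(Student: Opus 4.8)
The plan is to treat the two bounds separately; the upper bound is short, and essentially all the work is in the lower bound.

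\textit{Upper bound.} I would realize $\Rn$ as the line state $\ket{L_{2n+1}}$ on the path $1\text{--}2\text{--}\cdots\text{--}(2n{+}1)$ together with the single extra controlled-$Z$ that closes the cycle, and then use $U^{(1,2n+1)}=\mathbb{I}-2\,\dya{1}^{(1)}\otimes\dya{1}^{(2n+1)}$ to write
\[
\Rn \;=\; \ket{L_{2n+1}} \;-\; 2\bigl(\dya{1}^{(1)}\otimes\dya{1}^{(2n+1)}\bigr)\ket{L_{2n+1}}.
\]
The second term projects both path endpoints onto $\ket{1}$; by the graph/stabilizer rules this peels qubits $1$ and $2n{+}1$ off the path and pushes a $\sigma_z$ onto each of their neighbours, so $(\dya{1}^{(1)}\otimes\dya{1}^{(2n+1)})\ket{L_{2n+1}}=\tfrac12\ket{1}_1\ket{1}_{2n+1}\otimes\sigma_z^{(2)}\sigma_z^{(2n)}\ket{L_{2n-1}}$, which is the line state on qubits $2,\dots,2n$ up to fixed tensor factors and a local unitary. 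Since tensoring with fixed vectors and acting by local unitaries preserves tensor rank, $\rank(\Rn)\le \rank(\ket{L_{2n+1}})+\rank(\ket{L_{2n-1}})=2^{n}+2^{n-1}=3\cdot 2^{n-1}$. The one thing to check is the single-qubit projection formula, which is a one-line stabilizer computation.

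\textit{Lower bound.} The bound $\rank(\Rn)\ge 2^{n}$ is already forced by the Schmidt rank of the ``alternating'' bipartition $A\mid\overline{A}$ with $|A|=n$: there $\Gamma_{A\overline{A}}$ has rank $n$, so by \cref{thm:rdm} the Schmidt rank across this cut is $2^{n}$. To upgrade this to $2^{n}+1$ I would assume $\rank(\Rn)=2^{n}$ and extract a contradiction from the rigidity of a Schmidt-optimal CP decomposition: since the CP rank equals that Schmidt rank, any minimal decomposition $\Rn=\sum_{i=1}^{2^{n}}\ket{v_i}$ must have the $A$-parts $\{\ket{v_i^{A}}\}$ forming a basis of $(\mathbb{C}^2)^{\otimes n}$, and the $\overline{A}$-parts $\{\ket{v_i^{\overline{A}}}\}$ being linearly independent \emph{product} states spanning $\Range(\rho_{\overline{A}})$.

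\textit{Lower bound, core.} The heart of the argument is to pin down $\Range(\rho_{\overline{A}})$ via \eqref{eq:g_ptrace}: because the subgraph induced on $\overline{A}$ is a single edge plus $n-1$ isolated vertices, and $U(\mathbf z)$ acts on $\overline{A}$ as $\sigma_z$-strings ranging over the row space of $\Gamma_{A\overline{A}}$, the range splits as a direct sum $\bigoplus_{\mathbf b}\bigl(\bigotimes_{k}\sigma_z^{\,b_k}\ket{+}\bigr)\otimes W_{\sigma(\mathbf b)}$ over the $2^{n-1}$ assignments $\mathbf b$ to the isolated vertices, where $W_0,W_1$ are the two $2$-dimensional subspaces carved out of the edge's $2$-qubit graph state by $\sigma_z$. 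One verifies $W_0\cap W_1=\{0\}$ and that each $W_\sigma$ contains exactly two product states up to scaling (these unavoidably carry a relative phase $\pm i$), so $\Range(\rho_{\overline{A}})$ contains exactly $2^{n}$ product states up to scaling. Hence the minimal decomposition is essentially unique: the $\ket{v_i^{\overline{A}}}$ must be precisely these vectors, the paired $\ket{v_i^{A}}$ are then determined, and the contradiction should come from playing this rigidity against a second maximal bipartition $A'\mid\overline{A'}$ obtained by rotating $A$ — the forced product structures on the large overlap $\overline{A}\cap\overline{A'}$, reconciled with the reality of $\Rn$ versus the complex product vectors forced inside the $W_\sigma$-blocks, cannot be met simultaneously.

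\textit{Main obstacle.} The upper bound is routine once the $CZ=\mathbb{I}-2\,\ket{11}\!\bra{11}$ rewriting is spotted. The real difficulty is the lower bound: squeezing enough rigidity out of the Schmidt-optimal decomposition and combining the constraints from several maximal bipartitions to exclude $\rank=2^{n}$ \emph{uniformly in} $n\ge 2$ — uniformity genuinely matters here, since at $n=1$ the state $\R{3}$ is the triangle graph state, locally Clifford equivalent to the GHZ state, and thus has tensor rank $2<2^{1}+1$.
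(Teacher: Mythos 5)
Your upper bound is correct and takes a genuinely cleaner route than the paper's. You split the closing gate as $U^{(1,2n+1)}=\mathbb{I}-2\,\dya{1}^{(1)}\otimes\dya{1}^{(2n+1)}$ and observe that projecting \emph{both} endpoints of the path onto $\ket{1}$ collapses the second term to fixed vectors on qubits $1,2n+1$ tensored with $\sigma_z^{(2)}\sigma_z^{(2n)}\ket{L_{2n-1}}$, a line state on $2n-1$ qubits of rank $2^{n-1}$; this gives $2^{n}+2^{n-1}$ in one stroke. The paper instead uses the split $U^{(1,2n+1)}=I\otimes\sigma_z+2P_0^{(1)}\otimes P_1^{(2n+1)}$ and bounds the rank of $P_1^{(2n+1)}P_0^{(1)}\L{2n+1}$ by an explicit induction (\cref{def:rank_ab_base}, \cref{def:rank_ab}); your version reaches the same constant without the induction. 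Your parenthetical about $n=1$ is also a sharp and correct observation: $\R{3}$ is locally Clifford equivalent to the three-qubit GHZ state and has tensor rank $2<2^{1}+1$, so the lower bound genuinely requires $n\geq 2$ (the paper's contradiction argument implicitly uses this, since it needs the $A$-side factors to live on at least two qubits).

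The lower bound, however, has two real gaps. First, your claim that the direct-sum structure $\bigoplus_{\mathbf b}\bigl(\bigotimes_k\sigma_z^{b_k}\ket{+}\bigr)\otimes W_{\sigma(\mathbf b)}$ immediately yields ``exactly $2^n$ product states up to scaling'' does not follow: a product state in a direct sum of subspaces need not lie in a single summand (compare $\ket{++}\in\spn\{\ket{00},\ket{01},\ket{10},\ket{11}\}$). Ruling out product states that superpose across different $\mathbf b$-blocks is where the paper's \cref{Lem:product-states-support} does most of its work, via the four-way $\ket{\alpha},\ket{\beta},\ket{\gamma},\ket{\delta}$ decomposition on the edge qubits and then partial contractions $\ip{\omega_{\mbf w}}{\varphi}$ on the isolated-vertex qubits. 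Second, and more seriously, your concluding contradiction is only gestured at, and the mechanism you propose is unlikely to work as described: for the odd ring the complements of the alternating bipartition $A=\{2,4,\ldots,2n\}$ and of its rotation by one step intersect in a \emph{single} vertex, not a ``large overlap,'' so there is no obvious clash of forced product structures to exploit. You are in fact one line away from the paper's much simpler completion: since the $2^n$ admissible $\overline{A}$-factors are mutually orthogonal, the paired $A$-factors are uniquely determined by contracting against the Schmidt expansion $\ket{R_{2n+1}}=2^{-n/2}\sum_{\mbf z}\ket{\hat{\mbf z}}^{(A)}\ket{e_{\mbf z}}^{(\overline A)}$, and they come out proportional to $e^{-i\pi/4}\ket{\hat{\mbf z}}+e^{i\pi/4}\ket{\hat{\overline{\mbf z}}}$ --- GHZ-like superpositions of a computational string and its bitwise complement on the $n\geq 2$ qubits of $A$, hence not product states. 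That single observation (the paper's \cref{Thm:rank-lb}) closes the argument with one bipartition and no second cut.
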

\noindent We will break the proof into two propositions corresponding to the upper and lower bounds. To help clarify the arguments these are explicitly laid out for $\R{7}$ in the appendix.

\subsection{Upper Bound Analysis}\label{subsec:upper_bound}

In this section we provide a CP rank upper bound of $3\cdot2^{n-1}$ for odd ring graph states $\Rn$. Throughout the proof, we let 
\[
P_0 = \ket{0}\bra{0}, P_1 = \ket{1}\bra{1},
\]
and let $\CZ^{(a,b)}$ denote a controlled $Z$ operation where the $a$th qubit is the controlling qubit and the $b$th qubit is the controlled qubit. We have
\begin{align}\label{eq:CZ}
\CZ^{(a,b)} &= P_0^{(a)} \otimes I^{(b)} + P_1^{(a)} \otimes \sigma_z^{(b)} \nonumber\\
&= I^{(a)} \otimes \sigma_z^{(b)} + 2P_0^{(a)} \otimes P_1^{(b)}.
\end{align}
Below we show the main statement.

\begin{proposition}[CP rank upper bound for odd ring graph states]\label{prop:upper_bound}
    The CP rank of any odd ring state $\Rn$ is upper bounded by
    \begin{equation}\label{eq:upper_bound}
        \rank(\Rn) \leq 3\cdot2^{n-1}.
    \end{equation}
\end{proposition}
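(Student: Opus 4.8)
The plan is to exhibit an explicit decomposition of $\Rn$ into one term of CP rank $2^{n}$ plus one term of CP rank $2^{n-1}$; subadditivity of CP rank then gives the claimed bound $3\cdot 2^{n-1}$. Label the qubits $1,\dots,2n+1$ so that $\Rn$ has the path edges $(i,i+1)$ for $1\le i\le 2n$ together with the closing edge $(2n+1,1)$. Since all controlled-$Z$ gates commute, $\Rn=\CZ^{(2n+1,1)}\L{2n+1}$, where $\L{2n+1}$ is the line state on qubits $1,\dots,2n+1$; by the line-state rank formula $\rank(\L{2n+1})=2^{\lfloor(2n+1)/2\rfloor}=2^{n}$.

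I would then expand the closing gate using the \emph{second} form in \eqref{eq:CZ} (with $a=2n+1$, $b=1$), which is precisely what avoids doubling the rank:
\[
\Rn=\CZ^{(2n+1,1)}\L{2n+1}=\sigma_z^{(1)}\L{2n+1}+2\big(P_0^{(2n+1)}\otimes P_1^{(1)}\big)\L{2n+1}.
\]
The first term is a local-unitary image of $\L{2n+1}$, hence by \eqref{eq:slocc_equiv} has CP rank $2^{n}$. For the second term, the key point is that projecting the two endpoints of the line onto computational-basis states decouples them: using \eqref{eq:graph_state} together with $P_1^{(a)}\CZ^{(a,b)}=P_1^{(a)}\otimes\sigma_z^{(b)}$ and $P_0^{(a)}\CZ^{(a,b)}=P_0^{(a)}\otimes I^{(b)}$ (conditioning on qubit $a$), projecting qubit $1$ onto $\ket{1}$ turns $\CZ^{(1,2)}$ into $\sigma_z^{(2)}$, and projecting qubit $2n+1$ onto $\ket{0}$ deletes $\CZ^{(2n,2n+1)}$, leaving
\[
\big(P_0^{(2n+1)}\otimes P_1^{(1)}\big)\L{2n+1}\ \propto\ \ket{1}^{(1)}\otimes\big(\sigma_z^{(2)}\L{2n-1}\big)\otimes\ket{0}^{(2n+1)},
\]
with $\L{2n-1}$ now supported on qubits $2,\dots,2n$. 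Tensoring with fixed single-qubit states and acting by the local unitary $\sigma_z^{(2)}$ do not change CP rank, so this term has rank $\rank(\L{2n-1})=2^{\lfloor(2n-1)/2\rfloor}=2^{n-1}$.

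Putting the two pieces together and using subadditivity of CP rank (concatenate the two CP decompositions) yields $\rank(\Rn)\le 2^{n}+2^{n-1}=3\cdot 2^{n-1}$, which is \eqref{eq:upper_bound}. I expect the only delicate step to be the projection computation for the second term: one has to track carefully which $\sigma_z$ factors are induced on the neighbours of the two endpoints, confirm that the surviving state really is the line state on the $2n-1$ interior qubits, and check the small cases $n=1,2$ for consistency. The remaining ingredients — commutativity of the $\CZ$ gates, invariance of CP rank under local unitaries, and subadditivity of CP rank — are routine.
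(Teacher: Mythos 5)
Your proposal is correct, and it follows the paper's high-level strategy exactly: peel off the closing edge via $\CZ^{(2n+1,1)}=\sigma_z^{(1)}\otimes I+2P_1^{(1)}\otimes P_0^{(2n+1)}$ (the paper uses the mirror-image expansion $I\otimes\sigma_z^{(2n+1)}+2P_0^{(1)}\otimes P_1^{(2n+1)}$, which is equivalent by symmetry of $\CZ$), bound the first summand by $\rank(\L{2n+1})=2^{n}$ using local-unitary invariance, and bound the projected summand by $2^{n-1}$. Where you genuinely diverge is in how that second bound is obtained. The paper proves it by induction on $n$ (Lemmas \ref{def:rank_ab_base} and \ref{def:rank_ab}), tracking the pair of states $P_0^{(2n+1)}\ket{\phi_{2n+1}}$ and $P_1^{(2n+1)}\ket{\phi_{2n+1}}$ through a recursive $\CZZ$ construction, with a separate base case at $n=2$ and an explicit check for $n=1$. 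You instead observe that projecting both endpoints of the line onto computational-basis states kills or localizes the adjacent $\CZ$ gates, so the residual term factorizes as $\ket{1}^{(1)}\otimes\sigma_z^{(2)}\L{2n-1}^{(2,\dots,2n)}\otimes\ket{0}^{(2n+1)}$ up to normalization, and its rank is just $\rank(\L{2n-1})=2^{n-1}$ by the line-state formula already in the paper. Your computation checks out ($P_1^{(1)}\CZ^{(1,2)}=P_1^{(1)}\otimes\sigma_z^{(2)}$ and $P_0^{(2n+1)}\CZ^{(2n,2n+1)}=P_0^{(2n+1)}\otimes I^{(2n)}$, all other gates untouched), it handles $n=1$ uniformly rather than as a special case, and the concluding appeal to subadditivity of CP rank is standard. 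The trade-off is that the paper's inductive route produces the explicit recursive form of the decomposition (used in the appendix for the $\R{7}$ example), whereas your argument is shorter and reduces everything to a single known rank fact; both are valid proofs of the proposition.
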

\begin{proof}
For the case with $n=1$, we can easily verify that \[\ket{R_{3}} = \ket{++-} + \frac{1}{\sqrt{2}}\ket{001}  - \frac{1}{\sqrt{2}}\ket{110},
\]
thus satisfying the upper bound. Below we show the cases with $n\geq 2$.

Based on \eqref{eq:CZ} and the fact that
\[\R{2n+1} = U^{(1,2n+1)}\L{2n+1},\]
we have 
\begin{align*}
    &\Rn = \CZ^{(1,2n+1)} \L{2n+1} \\
    &= I^{(1)} \otimes \sigma_z^{(2n+1)}\L{2n+1} + 2P_0^{(1)} \otimes P_1^{(2n+1)}\L{2n+1}.
\end{align*}
The CP rank of the term $I^{(1)} \otimes \sigma_z^{(2n+1)}\L{2n+1}$ is $2^n$ since the CP rank of $\L{2n+1}$ is $2^{n}$. 
Define the state
\begin{equation}\label{eq:phi}
    \ket{\phi_{2n+1}} =  P_0^{(1)}\L{2n+1}.
\end{equation}
Based on \cref{def:rank_ab_base} and \cref{def:rank_ab} below, the CP rank of the term $P_1^{(2n+1)}\ket{\phi_{2n+1}}$ for all integers $n\geq 2$ is upper-bounded by $2^{n-1}$, thus proving the statement.
\end{proof}

Below we present \cref{def:rank_ab_base} and \cref{def:rank_ab}, which upper-bound the CP rank of $P_0^{(2n+1)}\ket{\phi_{2n+1}}$ and $P_1^{(2n+1)}\ket{\phi_{2n+1}}$ for all integers $n\geq 2$.
In our analysis below, we define a generalized controlled gate \begin{align*}
 \CZZ^{(i,j,k)} &:= \CZ^{(i,j)}\CZ^{(i,k)}\\
 & = P_0^{(i)} \otimes I^{(j)} \otimes I^{(k)} + P_1^{(i)} \otimes \sigma_z^{(j)} \otimes \sigma_z^{(k)},
\end{align*}
whose CP rank is also 2.
The line state $\L{2n+1}$ can be expressed as
\begin{align}\label{eq:linestate}
  \L{2n+1} &= \prod_{i=1}^n \CZ^{(2i,2i-1)}\CZ^{(2i,2i+1)} \ket{+}^{(1,\ldots,2n+1)} \nonumber\\
&= \prod_{i=1}^n CZZ^{(2i,2i-1, 2i+1)} \ket{+}^{(1,\ldots,2n+1)}.
\end{align} 

\begin{lemma}\label{def:rank_ab_base}
When $n = 2$, the ranks of both $P_0^{(2n+1)}\ket{\phi_{2n+1}}$ and $P_1^{(2n+1)}\ket{\phi_{2n+1}}$  with $\ket{\phi_{2n+1}}$ defined in \eqref{eq:phi} are bounded by 2.
\end{lemma}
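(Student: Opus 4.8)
The plan is to strip the two projectors off the line-state structure of $\ket{\phi_5}=P_0^{(1)}\L{5}$ and reduce everything to the three-qubit line state $\L{3}$, whose rank is $2^{\lfloor 3/2\rfloor}=2$ by the preceding lemma. The guiding observation is that every gate building a line state is diagonal in the computational basis, so a projector on an endpoint qubit commutes past all gates that do not touch it, and projecting an endpoint onto $\ket{0}$ simply deletes that endpoint from the graph.

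Concretely, I would first write $\L{5}=\CZ^{(1,2)}\CZ^{(2,3)}\CZ^{(3,4)}\CZ^{(4,5)}\ket{+}^{\otimes 5}$ and use $\CZ^{(1,2)}=P_0^{(1)}\otimes I^{(2)}+P_1^{(1)}\otimes\sigma_z^{(2)}$ together with $P_0P_1=0$ to get $P_0^{(1)}\CZ^{(1,2)}=P_0^{(1)}\otimes I^{(2)}$. Since $P_0^{(1)}$ commutes with the remaining three gates, this yields
\[ \ket{\phi_5}=\tfrac{1}{\sqrt2}\,\ket{0}^{(1)}\otimes\L{4}^{(2,3,4,5)}, \]
where $\L{4}^{(2,3,4,5)}=\CZ^{(2,3)}\CZ^{(3,4)}\CZ^{(4,5)}\ket{+}^{(2,3,4,5)}$ is the line state on qubits $2,3,4,5$. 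Now apply $P_s^{(5)}$ to the endpoint qubit $5$: from $\CZ^{(4,5)}=I^{(4)}\otimes P_0^{(5)}+\sigma_z^{(4)}\otimes P_1^{(5)}$ we get $P_0^{(5)}\CZ^{(4,5)}=I^{(4)}\otimes P_0^{(5)}$ and $P_1^{(5)}\CZ^{(4,5)}=\sigma_z^{(4)}\otimes P_1^{(5)}$, and since $P_s^{(5)}$ and $\sigma_z^{(4)}$ commute with $\CZ^{(2,3)}$ and $\CZ^{(3,4)}$ (all diagonal), we obtain
\[ P_0^{(5)}\ket{\phi_5}=\tfrac12\,\ket{0}^{(1)}\otimes\L{3}^{(2,3,4)}\otimes\ket{0}^{(5)},\qquad P_1^{(5)}\ket{\phi_5}=\tfrac12\,\ket{0}^{(1)}\otimes\bigl(\sigma_z^{(4)}\L{3}^{(2,3,4)}\bigr)\otimes\ket{1}^{(5)}. \]
Because the tensor rank is invariant under local unitaries and is unchanged by tensoring with single-qubit (hence product) vectors---the latter since projecting the extra qubit back onto a basis state is an SLOCC operation, which cannot increase rank---both states have tensor rank $\rank(\L{3})=2$, which gives the claimed bound (in fact an equality).

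The argument is essentially bookkeeping, so there is no real obstacle; the one spot to watch is the $P_1^{(5)}$ branch, where projecting the endpoint onto $\ket{1}$ leaves a residual $\sigma_z^{(4)}$, and one must note that this $\sigma_z^{(4)}$ slides freely past $\CZ^{(3,4)}$ (again because both are diagonal), so it acts as a genuinely local unitary on the three-qubit factor and therefore does not change its tensor rank. Alternatively, one can bypass the structural reduction and just exhibit the explicit rank-$2$ CP decomposition $\L{3}^{(2,3,4)}=\tfrac{1}{\sqrt2}\ket{+}^{(2)}\ket{0}^{(3)}\ket{+}^{(4)}+\tfrac{1}{\sqrt2}\ket{-}^{(2)}\ket{1}^{(3)}\ket{-}^{(4)}$ and tensor on the fixed single-qubit states, which directly displays each of $P_0^{(5)}\ket{\phi_5}$ and $P_1^{(5)}\ket{\phi_5}$ as a sum of two product states.
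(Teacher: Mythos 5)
Your argument is correct and lands on exactly the same two states as the paper's own proof: your $P_0^{(5)}\ket{\phi_5}=\tfrac12\ket{0}\otimes\L{3}\otimes\ket{0}$ and $P_1^{(5)}\ket{\phi_5}=\tfrac12\ket{0}\otimes\bigl(\sigma_z^{(4)}\L{3}\bigr)\otimes\ket{1}$ coincide with the paper's $\tfrac{1}{2\sqrt2}\ket{0}(\ket{+0+}+\ket{-1-})\ket{0}$ and $\tfrac{1}{2\sqrt2}\ket{0}(\ket{+0-}+\ket{-1+})\ket{1}$, the only difference being presentational --- the paper expands $\L{5}$ into four explicit terms and regroups them into a two-term product decomposition, whereas you peel the endpoint projectors off by commutation and quote $\rank(\L{3})=2$. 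Both are sound; your bookkeeping via ``projecting an endpoint onto $\ket{0}$ deletes that vertex'' is a slightly cleaner way to organize the same computation.
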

\begin{proof}
For $n=2$, 
{\small
\begin{align*}
&\L{2n +1} = \L{5} \\ &= CZZ^{(4,3,5)}CZZ^{(2,1,3)} 
\ket{+}^{(1,\ldots,5)} \\
  &  = \left(
I \otimes P_0 \otimes I \otimes P_0 \otimes I + 
I \otimes P_0 \otimes \sigma_z \otimes P_1 \otimes \sigma_z
\right)\ket{+}^{(1,\ldots,5)} \\
& +\left(
\sigma_z \otimes P_1 \otimes \sigma_z \otimes P_0 \otimes I + 
\sigma_z \otimes P_1 \otimes I \otimes P_1 \otimes \sigma_z
\right)\ket{+}^{(1,\ldots,5)} \\
& = \frac{1}{2} \ket{+0+0+} + 
\frac{1}{2} \ket{+0-1-} + 
\frac{1}{2} \ket{-1-0+} + 
\frac{1}{2} \ket{-1+1-},
\end{align*}}
thus we have
\begin{align*}
 & \ket{\phi_{5}} =  P_0^{(1)} \L{5} \\&= 
\frac{1}{2\sqrt{2}}\ket{0}\Big(
\ket{0+0+} + 
\ket{0-1-} + 
\ket{1-0+} + 
\ket{1+1-}
\Big) \\
& = \underbrace{\frac{1}{4}\ket{0}\Big(
\ket{0+0} + 
\ket{0-1} + 
\ket{1-0} + 
\ket{1+1}
\Big)\ket{0}}_{P_0^{(5)}\ket{\phi_{5}}}  \\ &
 + \underbrace{\frac{1}{4}\ket{0}\Big(
\ket{0+0} - 
\ket{0-1} +
\ket{1-0} -
\ket{1+1}
\Big)\ket{1}}_{P_1^{(5)}\ket{\phi_{5}}}.
\end{align*}
Above expressions for $P_0^{(5)}\ket{\phi_{5}}$ and $P_1^{(5)}\ket{\phi_{5}}$ can be rewritten as follows,
\begin{align*}
  &P_0^{(5)}\ket{\phi_{5}} = \frac{1}{2\sqrt{2}}\ket{0}\Big(
\ket{+0+} + \ket{-1-}
\Big)\ket{0}, \\
& P_1^{(5)}\ket{\phi_{5}} = \frac{1}{2\sqrt{2}}\ket{0}\Big(
\ket{+0-} + \ket{-1+}
\Big)\ket{1},
\end{align*}
thus the CP ranks are bounded by 2.
\end{proof}

\begin{lemma}\label{def:rank_ab}
When $n \geq 2$, the CP ranks of both states $P_0^{(2n+1)}\ket{\phi_{2n+1}}$ and $P_1^{(2n+1)}\ket{\phi_{2n+1}}$ are bounded by $2^{n-1}$.
\end{lemma}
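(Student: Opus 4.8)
The plan is to reduce both $P_0^{(2n+1)}\ket{\phi_{2n+1}}$ and $P_1^{(2n+1)}\ket{\phi_{2n+1}}$ to local modifications of a shorter line state, whose CP rank is already known. Everything hinges on one elementary identity: projecting a leaf of a line state onto the computational basis simply deletes that leaf, up to a local $\ket{b}$ and possibly a local $\sigma_z$ on its unique neighbor. Concretely, since the $\CZ$ gates defining a line state pairwise commute and the edge $(1,2)$ is the only one incident to vertex $1$, one can factor $\L{m}=\CZ^{(1,2)}\bigl(\ket{+}^{(1)}\otimes\L{m-1}\bigr)$, where $\L{m-1}$ is the line on vertices $2,\dots,m$. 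Expanding $\CZ^{(1,2)}\ket{+}^{(1)}=\tfrac{1}{\sqrt2}\ket{0}^{(1)}\otimes I^{(2)}+\tfrac{1}{\sqrt2}\ket{1}^{(1)}\otimes\sigma_z^{(2)}$ and projecting qubit $1$ gives, for $b\in\{0,1\}$,
\[
P_b^{(1)}\L{m}=\tfrac{1}{\sqrt2}\,\ket{b}^{(1)}\otimes(\sigma_z^{(2)})^{b}\,\L{m-1}.
\]

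Then I would apply this identity twice, at the two endpoints $1$ and $2n+1$ of $\L{2n+1}$ (these act on disjoint qubits, so the two projectors commute with each other and with the $P_0^{(1)}$ already present in $\ket{\phi_{2n+1}}$). Applying it at vertex $1$ with $b=0$ just recovers $\ket{\phi_{2n+1}}=P_0^{(1)}\L{2n+1}=\tfrac{1}{\sqrt2}\ket{0}^{(1)}\otimes\L{2n}$, where $\L{2n}$ now denotes the line on qubits $2,\dots,2n+1$; vertex $2n+1$ is a leaf of that line with unique neighbor $2n$, so a second application yields, for $b\in\{0,1\}$,
\[
P_b^{(2n+1)}\ket{\phi_{2n+1}}=\tfrac12\,\ket{0}^{(1)}\otimes(\sigma_z^{(2n)})^{b}\,\L{2n-1}\otimes\ket{b}^{(2n+1)},
\]
where $\L{2n-1}$ is the line on qubits $2,\dots,2n$. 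Since $\ket{0}^{(1)}$, $\ket{b}^{(2n+1)}$ and the single-qubit unitary $(\sigma_z^{(2n)})^{b}$ are all local (and relabeling qubits leaves the rank unchanged), I conclude $\rank\bigl(P_b^{(2n+1)}\ket{\phi_{2n+1}}\bigr)=\rank(\L{2n-1})=2^{\lfloor (2n-1)/2\rfloor}=2^{n-1}$, using the known value $\rank(\L{m})=2^{\lfloor m/2\rfloor}$. This is precisely the claimed bound — in fact an equality.

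I do not expect a genuine obstacle. The only things requiring care are the commutation/factoring bookkeeping (that after deleting vertex $1$ the residual state really is the line on $\{2,\dots,2n+1\}$, and that $2n+1$ is a leaf of it, so the leaf-projection identity applies verbatim a second time) and tracking the stray $\sigma_z$ on the interior vertex $2n$ in the $b=1$ case, which is harmless because it is local. The explicit $n=2$ computation in \cref{def:rank_ab_base} is exactly this two-step reduction carried out once, so it doubles as the base case if one prefers to phrase the whole argument as an induction on $n$.
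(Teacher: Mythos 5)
Your proof is correct, and it takes a genuinely different route from the paper's. The paper proceeds by induction on $n$: it writes $\ket{\phi_{2n+3}}=\CZZ^{(2n+2,2n+1,2n+3)}P_0^{(1)}\L{2n+1}\ket{++}$, splits this into the $P_0^{(2n+1)}$ and $P_1^{(2n+1)}$ components of $\ket{\phi_{2n+1}}$, and uses the identities $\ket{0+}+\ket{1-}=\ket{+0}+\ket{-1}$ and $\ket{0+}-\ket{1-}=\ket{+1}+\ket{-0}$ to show that each of $P_0^{(2n+3)}\ket{\phi_{2n+3}}$ and $P_1^{(2n+3)}\ket{\phi_{2n+3}}$ is a sum of two terms whose ranks are controlled by the inductive hypothesis, with the $n=2$ case (your base case) computed explicitly. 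You instead give a closed-form identification: the leaf-projection identity $P_b^{(1)}\L{m}=\tfrac{1}{\sqrt2}\ket{b}^{(1)}\otimes(\sigma_z^{(2)})^{b}\L{m-1}$, applied at both endpoints, exhibits $P_b^{(2n+1)}\ket{\phi_{2n+1}}$ as $\L{2n-1}$ dressed with local pure states and a local $\sigma_z$, whence the rank is exactly $\rank(\L{2n-1})=2^{n-1}$. Your formula reproduces the paper's explicit $n=2$ expressions verbatim (e.g.\ $P_0^{(5)}\ket{\phi_5}=\tfrac{1}{2\sqrt2}\ket{0}(\ket{+0+}+\ket{-1-})\ket{0}$ is exactly $\tfrac12\ket{0}\otimes\L{3}\otimes\ket{0}$), which is a good sanity check. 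What each approach buys: yours is shorter, needs no induction, and upgrades the bound to an equality, at the cost of invoking the known value of $\rank(\L{m})$ (whose upper-bound direction the paper establishes via the explicit decomposition in its appendix); the paper's induction, while more laborious, generates the explicit recursive CP decomposition that is reused in the appendix to write out the twelve-term decomposition of $\R{7}$.
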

\begin{proof}
We argue by induction on $n$. Assume that the ranks of both $P_0^{(2n+1)}\ket{\phi_{2n+1}}$ and $P_1^{(2n+1)}\ket{\phi_{2n+1}}$ are bounded by $2^{n-1}$. We will show that the CP ranks of both states $P_0^{(2n+3)}\ket{\phi_{2n+3}}$ and $P_1^{(2n+3)}\ket{\phi_{2n+3}}$ are bounded by $2^{n}$.

$\ket{\phi_{2n+3}}$ can be rewritten as follows,
\begin{align}\label{eq:induction}
  &\ket{\phi_{2n+3}} = P_0^{(1)} \L{2n+3} \nonumber\\
&= P_0^{(1)} CZZ^{(2n+2, 2n+1, 2n+3)}\L{2n+1}\ket{++}  \nonumber \\
& = CZZ^{(2n+2, 2n+1, 2n+3)}P_0^{(1)} \L{2n+1}\ket{++} \nonumber \\
& = CZZ^{(2n+2, 2n+1, 2n+3)} P_0^{(2n+1)}\ket{\phi_{2n+1}}\ket{++} \nonumber \\
&+ CZZ^{(2n+2, 2n+1, 2n+3)}P_1^{(2n+1)}\ket{\phi_{2n+1}}\ket{++} \nonumber\\
& = \frac{1}{\sqrt{2}}P_0^{(2n+1)}\ket{\phi_{2n+1}}\Big(\ket{0+} 
+ \ket{1-}\Big) \nonumber\\
& + 
\frac{1}{\sqrt{2}}P_1^{(2n+1)}\ket{\phi_{2n+1}}\Big(\ket{0+} - \ket{1-}\Big).
\end{align}
Note that the third equality comes from the commutativity of $CZZ^{(2n+2, 2n+1, 2n+3)}, P_0^{(1)}$. 
Based on the transformation
\begin{align*}
    \ket{0+} + \ket{1-} = \ket{+0} + \ket{-1}, \\ 
\ket{0+} - \ket{1-} = \ket{+1} + \ket{-0},
\end{align*}
\eqref{eq:induction} can be rewritten as
\begin{align*}
 &\ket{\phi_{2n+3}} \\ &= \frac{1}{\sqrt{2}}P_0^{(2n+1)}\ket{\phi_{2n+1}}\Big(\ket{+0} 
 + \ket{-1}\Big) \\
 &+ 
\frac{1}{\sqrt{2}}P_1^{(2n+1)}\ket{\phi_{2n+1}}\Big(\ket{+1} + \ket{-0}\Big)   \\
&= \underbrace{\frac{1}{\sqrt{2}} \Big(
P_0^{(2n+1)}\ket{\phi_{2n+1}}\ket{+} + P_1^{(2n+1)}\ket{\phi_{2n+1}}\ket{-}
\Big)
\ket{0}
}_{P_0^{(2n+3)}\ket{\phi_{2n+3}}}
\\
&+ 
\underbrace{\frac{1}{\sqrt{2}} \Big(
P_0^{(2n+1)}\ket{\phi_{2n+1}}\ket{-} + P_1^{(2n+1)}\ket{\phi_{2n+1}}\ket{+}
\Big)
\ket{1}
}_{P_1^{(2n+3)}\ket{\phi_{2n+3}}}
.
\end{align*}
It can be easily seen that the CP ranks of both states $P_0^{(2n+3)}\ket{\phi_{2n+3}}$ and $P_1^{(2n+3)}\ket{\phi_{2n+3}}$ are bounded by $2^{n}$.
Since the rank upper bounds for the base case ($n=2$) has been shown in \cref{def:rank_ab_base}, the lemma is proved. 
\end{proof}



\subsection{Lower Bound Analysis}

We now turn to the lower bound.  Recall that for any graph $G$ and subset of vertices $A\subset V$, the graph state $\ket{G}$ can be expressed via Eq.~\ref{eq:g_ptrace} as
\begin{align}
\label{Eq:graph-state-full}
    \ket{G}=\frac{1}{\sqrt{2^{|A|}}}\sum_{\mbf{z}\in\mbb{Z}_2^{|A|}}(-1)^{|\mbf{z}|}\ket{\mbf{z}}^{(A)}U(\mbf{z})\ket{G-A},
\end{align}
where $U(\textbf{z}) = \prod_{a\in A}(\prod_{b\in N_a} \sigma_z^{(b)})^{z_a}$. 
Here and later we represent $\mbf z$ as $ (z_2,z_4,\ldots,z_{2n})$.
Consider splitting the ring into $n$ even and $n+1$ odd vertices, denoting $A=\{2,4,\ldots,2n\}$ and $\overline{A}=\{1,3\ldots,2n+1\}$.
Now, the density matrix of $\overline{A}$ in $\ket{R_{2n+1}}$ is, using Eq.~\eqref{Eq:graph-state-full},
\begin{align*}
\rho^{(\overline{A})} = \frac{1}{2^{n}}\sum_{\mbf{z}\in\mbb{Z}_2^{n}}U(\mbf{z})\ket{R_{2n+1}-A}\bra{R_{2n+1}-A}U(\mbf{z})^{\dagger},
\end{align*}
since for each $a\in A$, $N_a\subseteq \overline{A}$.
The above density matrix decomposition is an eigendecomposition as a consequence of the following lemma.
\begin{lemma}
\label{Lem:basis-support}
The states 
\[ \ket{e_{\mbf{z}}}:=U(\mbf{z})\ket{R_{2n+1}-A}\qquad\forall \mbf{z}\in\mbb{Z}^{n}_2\]
are orthonormal and thus eigenvectors of $\rho^{(\overline{A})}$.
\end{lemma}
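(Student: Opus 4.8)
The plan is to recognize the statement as a direct instance of \cref{thm:rdm} for the bipartition $A=\{2,4,\dots,2n\}$, $\overline A=\{1,3,\dots,2n+1\}$ of the odd ring (note $|A|=n$, so $\mbf z$ indeed ranges over $\mbb{Z}_2^{n}$ as written). Since every $a\in A$ has $N_a\subseteq\overline A$, each $U(\mbf z)$ is a product of $\sigma_z$'s supported entirely on $\overline A$, and it sends the graph state $\ket{R_{2n+1}-A}$ (of the graph obtained by deleting the even vertices) to the graph-basis state $\sigma_z^{\mbf s(\mbf z)}\ket{R_{2n+1}-A}$ in the sense of \cref{eq:graph_basis}, where $\mbf s(\mbf z)$ is the $\mathbb F_2$-linear function of $\mbf z$ — with matrix $\Gamma_{A\overline A}$ — that records which qubits of $\overline A$ pick up a $\sigma_z$. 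Distinct graph-basis states are orthogonal, so $\ket{e_{\mbf z}}$ and $\ket{e_{\mbf z'}}$ are orthogonal whenever $\mbf s(\mbf z)\neq\mbf s(\mbf z')$, i.e.\ whenever $\mbf z-\mbf z'\notin\ker\Gamma_{A\overline A}$, while $\langle e_{\mbf z}|e_{\mbf z}\rangle=1$ because $U(\mbf z)$ is unitary and $\ket{R_{2n+1}-A}$ is a unit vector. Thus the whole statement reduces to the single claim $\ker\Gamma_{A\overline A}=\{0\}$, equivalently $\rank_{\mathbb F_2}\Gamma_{A\overline A}=n=|A|$.

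The remaining step — the only one requiring any work, and even then just a little — is this $\mathbb F_2$-rank computation, and the one thing to get right is the choice of vertex ordering. Order $A=(2,4,\dots,2n)$ and $\overline A=(1,3,\dots,2n+1)$. In $R_{2n+1}$ the even vertex $2i$ is adjacent exactly to the odd vertices $2i-1$ and $2i+1$, so the column of $\Gamma_{A\overline A}$ indexed by $2i$ has its two nonzero entries in the rows indexed by $2i-1$ and $2i+1$. Deleting the single row indexed by vertex $2n+1$ therefore leaves an $n\times n$ matrix whose $i$-th column is supported on rows $i$ and (for $i<n$) $i+1$ — a triangular matrix with $1$'s all along its diagonal, hence invertible over $\mathbb F_2$. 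Consequently $\rank_{\mathbb F_2}\Gamma_{A\overline A}\geq n$, and since $\Gamma_{A\overline A}$ has exactly $n=|A|$ columns we get equality and $\ker\Gamma_{A\overline A}=\{0\}$. (Equivalently: deleting the even vertices of the odd cycle leaves on $\overline A$ a single edge $\{1,2n+1\}$ together with $n-1$ isolated vertices, and $\Gamma_{A\overline A}$ is the incidence matrix of the path $1\!-\!3\!-\!\cdots\!-\!(2n+1)$ whose $n$ vertices $A$ label, which has full $\mathbb F_2$-rank since that auxiliary path is connected.)

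Putting the pieces together: with $\ker\Gamma_{A\overline A}=\{0\}$, any two distinct $\mbf z,\mbf z'\in\mbb{Z}_2^{n}$ satisfy $\mbf z-\mbf z'\notin\ker\Gamma_{A\overline A}$, hence $\langle e_{\mbf z}|e_{\mbf z'}\rangle=0$, while $\langle e_{\mbf z}|e_{\mbf z}\rangle=1$; so $\{\ket{e_{\mbf z}}\}_{\mbf z\in\mbb{Z}_2^{n}}$ is an orthonormal set of $2^{n}$ vectors. Since $\rho^{(\overline A)}=\tfrac1{2^{n}}\sum_{\mbf z}\op{e_{\mbf z}}{e_{\mbf z}}$ as displayed just above the statement, this is already its spectral decomposition: each $\ket{e_{\mbf z}}$ is an eigenvector with eigenvalue $2^{-n}$, and $\rho^{(\overline A)}$ is maximally mixed on the $2^{n}$-dimensional subspace they span — consistent with \cref{thm:rdm} and $d=\rank\Gamma_{A\overline A}=n$. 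I do not expect a genuine obstacle here; the only place care is needed is the triangular-minor (equivalently, path-incidence-matrix) observation in the second step.
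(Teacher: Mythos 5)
Your proof is correct, but it takes a genuinely different route from the one in the paper. You reduce the lemma to the general graph-basis orthogonality fact behind \cref{thm:rdm} --- two states $U(\mbf{z})\ket{G-A}$ and $U(\mbf{z}')\ket{G-A}$ are orthogonal precisely when $\mbf{z}\oplus\mbf{z}'\notin\ker\Gamma_{A\overline{A}}$ --- and then do the only remaining work over $\mathbb{F}_2$: the biadjacency matrix of the even/odd split of the odd ring is the incidence matrix of the path $1\!-\!3\!-\!\cdots\!-\!(2n+1)$, which has full column rank $n$ (your triangular-minor argument is right), so the kernel is trivial and all $2^n$ states are pairwise orthogonal unit vectors. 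The paper instead computes everything explicitly: it factors $\ket{R_{2n+1}-A}$ into $\ket{L_2^{(1,2n+1)}}$ and single-qubit $\ket{+}$'s, writes $\ket{e_{\mbf{z}}}=\ket{\phi_{\mbf{z}}}\otimes\bigl(\sigma_z^{z_2}\otimes\sigma_z^{z_{2n}}\bigr)\ket{L_2^{(1,2n+1)}}$, shows $\ip{\phi_{\mbf{z}}}{\phi_{\mbf{z}'}}=0$ unless $\mbf{z}'\in\{\mbf{z},\overline{\mbf{z}}\}$, and disposes of the $\mbf{z}'=\overline{\mbf{z}}$ case using the two-qubit line-state factor. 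Your version is shorter and more conceptual; the paper's version earns its extra length because the factored form of $\ket{e_{\mbf{z}}}$ and the inner-product formula for the $\ket{\phi_{\mbf{z}}}$ are exactly the ingredients reused in \cref{Lem:product-states-support} and \cref{Thm:rank-lb}, so the explicit computation is not wasted. One caution: the two cases of \cref{thm:rdm} as printed are swapped (the overlap is $1$ when $\mbf{z}-\mbf{z}'\in\ker\Gamma_{A\overline{A}}$ and $0$ otherwise); you state the correct orientation, so your argument is unaffected, but do not cite the displayed cases verbatim.
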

\begin{proof}


We can decompose the ring state into a product state and a two-qubit line state,
\[\ket{R_{2n+1}-A}=\bigotimes_{k=2,4,\ldots,2(n-1)}\ket{+}^{(k+1)}\ket{L_2^{(1,2n+1)}},\]
where $\ket{L_{2}^{(1,2n+1)}}=U^{(1,2n+1)}\ket{+}^{(1)}\otimes \ket{+}^{(2n+1)}$.
From the definition of $U(\mbf{z})$, for any $(z_2,z_4,\ldots,z_{2n})\in\mbb{Z}_2^n$ we can then write
\begin{align}
    \ket{e_{\mbf{z}}}=\ket{\phi_\mbf{z}}\otimes \left(\sigma_z^{z_2}\otimes\sigma_z^{z_{2n}}\ket{L_{2}^{(1,2n+1)}}\right)\notag,
    \end{align}
where
\begin{equation}
    \ket{\phi_\mbf{z}}=\bigotimes_{k=2,4,\ldots,2(n-1)}\sigma_z^{z_k\oplus z_{k+2}}\ket{+}^{(k+1)}.
\end{equation}
    We first show that the product states $\ket{\phi_\mbf z}$ are orthogonal except when $\mbf z' = \bar{\mbf z}$, where $\bar{\mbf z}$ denotes the bitwise conjugate of $\mbf{z}$.
Observe that, 
\begin{align*}
\langle \phi_{\mbf z}\ket{\phi_{\mbf z'}} =&
 \prod_{k=2,4,\ldots,2(n-1)} \bra{+}\sigma_z^{z_{k}\oplus z_{k+2}\oplus z_{k}' \oplus z_{k+2}'}\ket{+},
\end{align*}
hence  $\langle \phi_{\mbf z}\ket{\phi_{\mbf z'}} = 0$, unless $z_{k}\oplus z_{k}' = z_{k+2}\oplus z_{k+2}'$ for all $k=2,4,\ldots,2(n-1)$.
If each $z_{k}\oplus z_{k}'=0$, then $\mbf z' = \mbf z$, otherwise each $z_{k}\oplus z_{k}'=1$, so $\mbf z' = \bar{\mbf z}$.  Hence, we have established
\begin{equation}
\label{Eq:inner-product-z}
\ip{\phi_{\mbf{z}}}{\phi_{\mbf{z}'}}=\begin{cases}
0\qquad\text{if $\mbf{z}'\not=\overline{\mbf{z}}$}\\
1\qquad\text{if $\mbf{z}'=\overline{\mbf{z}}$}
\end{cases}.
\end{equation}
We now complete the proof of the lemma by showing that if $\mbf z'= \bar{\mbf z}$, the 2-vertex line-state components of $\ket{e_{\mbf z}}$ and $\ket{e_{\mbf z'}}$ are orthogonal,
\begin{align}
    &\bra{ L_2^{(1,2n+1)}}  \sigma_z^{z_2\oplus \overline{z}_2}\otimes\sigma_z^{z_{2n}\oplus\overline{z}_{2n}} \ket{L_2^{(1,2n+1)}} \nonumber \\
&= \frac{1}{2}(\bra{0+} + \bra{1-})(\sigma_z \otimes \sigma_z)(\ket{0+} + \ket{1-}) \nonumber \\
&= \frac{1}{2}(\bra{0+} + \bra{1-})(\ket{0-} - \ket{1+}) = 0.
\end{align}
\end{proof}

\begin{figure}
  \includegraphics[width=.5\linewidth]{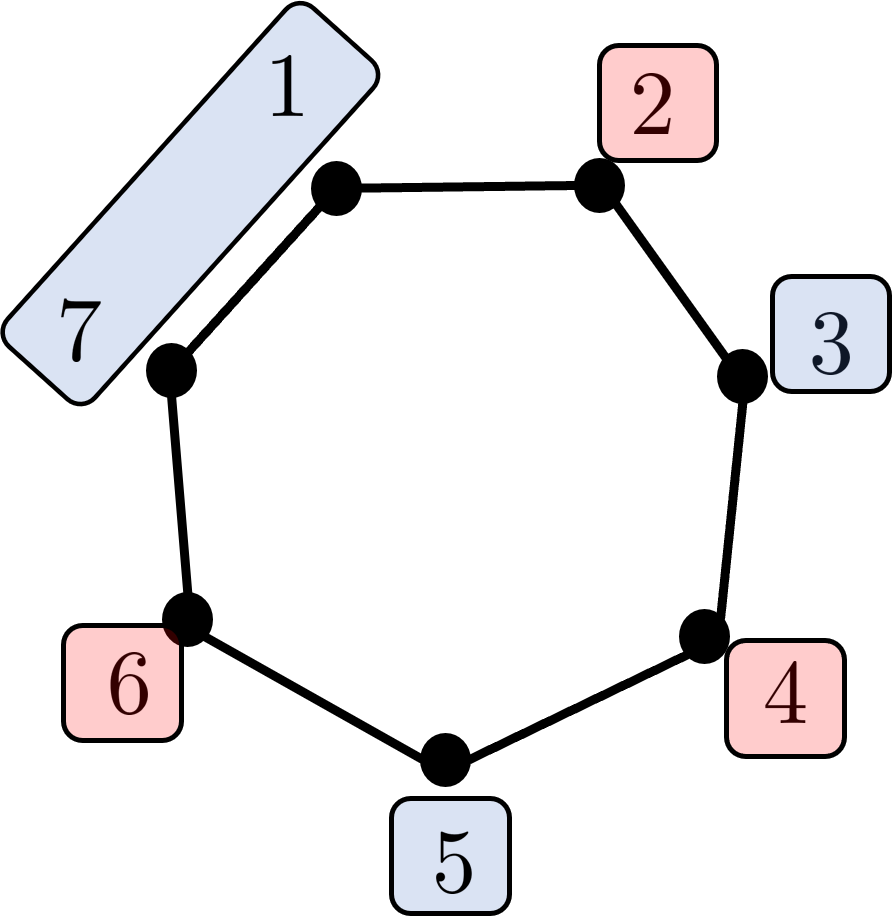}
  \caption{Lower bounding the rank of $\ket{R_{2n+1}}$ for $n=3$.  For the seven-qubit ring state, we trace out the subset of qubits $A=\{2,4,6\}$, depicted as the red nodes in the ring.  This set is judiciously chosen such that after tracing out qubits belonging to $A$, the remaining qubits $\overline{A}=\{1,3,5,7\}$ (shown in blue) consists of just one entangled pair $(1,7)$ and the rest are completely uncoupled.  This relatively simple structure allows us to characterize all the product states in the support of $\rho^{(\overline{A})}$ (Lemma \ref{Lem:product-states-support}).  Using this characterization, the desired lower bound on the tensor rank is proven (Theorem \ref{Thm:rank-lb}).}
  \label{fig:ring-pic}
\end{figure}

We next go further and characterize the product states in the support of $\rho^{(\overline{A})}$.
\begin{lemma}
\label{Lem:product-states-support}
Let $S_{0,0}\subset\mbb{Z}_2^n$ be the collection of sequences $\mbf{z}=(z_2,z_4,\ldots,z_{2n})$ with $z_2=0$ and $z_{2n}=0$, and let $S_{0,1}$ be the collection of sequences $\mbf{z}=(z_2,z_4,\ldots,z_{2n})$ with $z_2=0$ and $z_{2n}=1$.  Then the support of $\rho^{(\overline{A})}$ contains only $2^n$ product states given by
\begin{align}
\label{Eq:product-state-support}
    \forall\mbf{z}&\in S_{0,0}:\notag\\
    &\qquad\begin{cases}\frac{1}{\sqrt{2}}\left(e^{i\pi/4}\ket{e_{\mbf{z}}}+e^{-i\pi/4}\ket{e_{\mbf{\overline{z}}}}\right)=\ket{\tilde{+}\tilde{+}}\ket{\phi_{\mbf{z}}}\\
    \frac{1}{\sqrt{2}}\left(e^{-i\pi/4}\ket{e_{\mbf{z}}}+e^{i\pi/4}\ket{e_{\overline{\mbf{z}}}}\right)=\ket{\tilde{-}\tilde{-}}\ket{\phi_{\mbf{z}}}\end{cases}\notag\\
    \forall\mbf{z}&\in S_{0,1}:\notag\\
    &\qquad\begin{cases}\frac{1}{\sqrt{2}}\left(e^{i\pi/4}\ket{e_{\mbf{z}}}+e^{-i\pi/4}\ket{e_{\mbf{\overline{z}}}}\right)=\ket{\tilde{+}\tilde{-}}\ket{\phi_{\mbf{z}}}\\
    \frac{1}{\sqrt{2}}\left(e^{-i\pi/4}\ket{e_{\mbf{z}}}+e^{i\pi/4}\ket{e_{\overline{\mbf{z}}}}\right)=\ket{\tilde{-}\tilde{+}}\ket{\phi_{\mbf{z}}}\end{cases}
\end{align}
where $\ket{\tilde{\pm}}=\frac{1}{\sqrt{2}}(\ket{0}\pm i\ket{1})$.   
\end{lemma}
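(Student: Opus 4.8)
The plan is to build on \cref{Lem:basis-support}, which gives an orthonormal eigenbasis $\{\ket{e_{\mbf{z}}}\}_{\mbf{z}\in\mbb{Z}_2^n}$ of $\rho^{(\overline{A})}$ together with the factorization $\ket{e_{\mbf{z}}}=\ket{\phi_{\mbf{z}}}\otimes\bigl(\sigma_z^{z_2}\otimes\sigma_z^{z_{2n}}\ket{L_2^{(1,2n+1)}}\bigr)$, where $\ket{\phi_{\mbf{z}}}$ lives on the ``middle'' vertices $M=\{3,5,\ldots,2n-1\}$ of $\overline{A}$ and the second factor on $B=\{1,2n+1\}$; write $\ket{L_2}=\tfrac{1}{\sqrt{2}}(\ket{0+}+\ket{1-})$ on qubits $(1,2n+1)$. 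First I would repackage the support of $\rho^{(\overline{A})}$ as an orthogonal direct sum indexed by the $M$-register. Since $\ket{\phi_{\mbf{z}}}$ depends on $\mbf{z}$ only through the bit differences $z_k\oplus z_{k+2}$ ($k=2,4,\ldots,2n-2$), the map $\mbf{z}\mapsto\ket{\phi_{\mbf{z}}}$ is two-to-one (identifying $\mbf{z}$ with $\overline{\mbf{z}}$), its image is the set of $2^{n-1}$ product states $\bigotimes_k\ket{\pm}$ spanning the $M$-register, and the telescoping identity $z_2\oplus z_{2n}=\bigoplus_k(z_k\oplus z_{k+2})$ shows the parity of the sign pattern of $\ket{\phi_{\mbf{z}}}$ equals $z_2\oplus z_{2n}$. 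Grouping the two eigenvectors sharing a given $\ket{\phi}=\ket{\phi_{\mbf{z}}}$, their $B$-components span a subspace $W_\phi$ of the two-qubit space on $\{1,2n+1\}$ which equals $W^{(0)}:=\spn\{\ket{L_2},(\sigma_z\otimes\sigma_z)\ket{L_2}\}$ when $\phi$ has even parity and $W^{(1)}:=\spn\{(\sigma_z\otimes I)\ket{L_2},(I\otimes\sigma_z)\ket{L_2}\}$ when odd. A short check that the four vectors $\sigma_z^{s_1}\otimes\sigma_z^{s_2}\ket{L_2}$ ($s_1,s_2\in\{0,1\}$) are orthonormal then yields $\dim W^{(0)}=\dim W^{(1)}=2$ and $W^{(0)}\perp W^{(1)}$, whence the support of $\rho^{(\overline{A})}$ decomposes as the orthogonal direct sum $\bigoplus_\phi\ket{\phi}^{(M)}\otimes W_\phi^{(B)}$ over the $2^{n-1}$ Hadamard basis states $\phi$, of total dimension $2^{n-1}\cdot 2=2^n$.

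Second, I would read off the product states from this decomposition. A state that is product across all $n+1$ vertices of $\overline{A}$ is in particular product across the cut $M\,|\,B$, so it has the form $\ket{\mu}^{(M)}\otimes\ket{\nu}^{(B)}$ with $\ket{\mu}$ product over the vertices of $M$ and $\ket{\nu}$ product over the two vertices of $B$. Expanding $\ket{\mu}$ in the Hadamard basis of $M$, the set of basis states with nonzero coefficient is a ``subcube'' $D=\prod_k D_k$ with each $D_k\subseteq\{0,1\}$ nonempty. Membership in $\bigoplus_\phi\ket{\phi}^{(M)}\otimes W_\phi^{(B)}$ forces $\ket{\nu}\in W_\phi$ for every $\ket{\phi}\in D$; but if $\lvert D\rvert\geq 2$ then some $D_k=\{0,1\}$, so $D$ contains two states differing in coordinate $k$, hence of opposite parity, forcing $\ket{\nu}\in W^{(0)}\cap W^{(1)}=\{0\}$ --- impossible. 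So $\ket{\mu}$ is a single Hadamard basis state $\ket{\phi}$ and $\ket{\nu}$ is a product vector inside the two-dimensional $W_\phi$.

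Third, I would enumerate the product vectors in each $W_\phi$ by the standard $2\times2$ rank-one criterion. Writing a generic element of $W^{(0)}$ as $a\ket{L_2}+b(\sigma_z\otimes\sigma_z)\ket{L_2}=\tfrac{1}{\sqrt{2}}\bigl(a(\ket{0+}+\ket{1-})+b(\ket{0-}-\ket{1+})\bigr)$ and expanding in the computational basis, the coefficient matrix has determinant proportional to $a^2+b^2$, which is rank one exactly when $a=\pm ib$; substituting back gives (up to scalars) $\ket{\tilde{+}}\ket{\tilde{+}}$ for $a=ib$ and $\ket{\tilde{-}}\ket{\tilde{-}}$ for $a=-ib$. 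The analogous computation in $W^{(1)}$ gives $\ket{\tilde{+}}\ket{\tilde{-}}$ and $\ket{\tilde{-}}\ket{\tilde{+}}$. Thus each $W_\phi$ contains exactly two product rays, so the product states of the support are exactly $\ket{\phi}^{(M)}$ tensored with one of these two vectors. Finally, parametrizing the even-parity $\ket{\phi}$'s by $\mbf{z}\in S_{0,0}$ (for which $\ket{e_{\mbf{z}}}=\ket{\phi_{\mbf{z}}}\ket{L_2}$) and the odd-parity ones by $\mbf{z}\in S_{0,1}$ (for which $\ket{e_{\mbf{z}}}=\ket{\phi_{\mbf{z}}}(I\otimes\sigma_z)\ket{L_2}$), one recovers precisely the list in \eqref{Eq:product-state-support}, the phases $e^{\pm i\pi/4}$ encoding the two choices $a=\pm ib$; the count is $2\lvert S_{0,0}\rvert+2\lvert S_{0,1}\rvert=2^n$.

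The conceptual crux is the first step --- showing the support is an orthogonal direct sum $\bigoplus_\phi\ket{\phi}^{(M)}\otimes W_\phi^{(B)}$ involving only the two mutually orthogonal ``boundary'' subspaces $W^{(0)}$, $W^{(1)}$. This orthogonality is precisely what forbids a product state from spreading over blocks of opposite parity and collapses the question to a single two-qubit count; concretely it rests on the orthonormality of the four translates $\sigma_z^{s_1}\otimes\sigma_z^{s_2}\ket{L_2}$, a brief but essential calculation. Everything after that --- the subcube argument and the rank-one analysis --- is routine.
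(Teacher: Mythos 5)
Your proof is correct, and it arrives at the same structural conclusion as the paper's --- the boundary pair $(1,2n+1)$ is confined to one of two mutually orthogonal two-dimensional subspaces, each containing exactly two product rays --- but your uniqueness argument is organized in a genuinely different, and arguably cleaner, way. The paper fixes the boundary qubits first: it expands a putative product state as $\ket{\tilde{+}\tilde{+}}\ket{\alpha}+\ket{\tilde{-}\tilde{-}}\ket{\beta}+\ket{\tilde{+}\tilde{-}}\ket{\gamma}+\ket{\tilde{-}\tilde{+}}\ket{\delta}$, uses partial contractions to show that at most one of $\ket{\alpha},\ket{\beta}$ and at most one of $\ket{\gamma},\ket{\delta}$ survives and that the two groups cannot mix, and then needs a separate contraction argument (the auxiliary states $\ket{\omega_{\mbf{w}}}$, with $n=2,3$ treated as special cases) to show that the only product states in $\spn\{\ket{\phi_{\mbf{z}}}\}$ are the $\ket{\phi_{\mbf{z}}}$ themselves. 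You instead fix the middle register first: the decomposition of the support as $\bigoplus_{\phi}\ket{\phi}^{(M)}\otimes W_{\phi}^{(B)}$, combined with the subcube structure of a product state's Hadamard-basis support on $M$, forces the $M$-part to be a single $\ket{\phi_{\mbf{z}}}$ in one stroke and uniformly in $n$; the determinant criterion then makes explicit what the paper only asserts, namely that a two-dimensional space spanned by two orthogonal maximally entangled two-qubit states contains exactly two product rays. What each approach buys: the paper's route needs no global picture of the support but pays for it with several case-by-case contraction arguments; yours front-loads the work into one structural claim and makes everything downstream routine. The two checks you should write out rather than assert are the parity identity $z_2\oplus z_{2n}=\bigoplus_k(z_k\oplus z_{k+2})$ linking the sign pattern of $\ket{\phi_{\mbf{z}}}$ to the boundary bits, and the orthonormality of the four translates $\sigma_z^{s_1}\otimes\sigma_z^{s_2}\ket{L_2}$, since $W^{(0)}\perp W^{(1)}$ is exactly what the subcube argument leans on; both are immediate computations.
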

\begin{proof}
By Lemma \ref{Lem:basis-support}, the support of $\rho^{(\overline{A})}$ is spanned by the orthonormal states $\ket{e_{\mbf{z}}}$.  Then since $\ket{\phi_{\mbf{z}}}=\ket{\phi_{\overline{\mbf{z}}}}$, we can organize the $2^n$ eigenstates $\ket{e_{\mbf{z}}}$ into four sets as
\begin{align}
\label{Eq:eigenstates-grouped}
    &z_2=0, z_{2n}=0:\notag\\
    &\left\{\tfrac{1}{\sqrt{2}}\left(\ket{0+}+\ket{1-}\right)^{(1,2n+1)}\ket{\phi_{\mbf{z}}}^{(3,5,\ldots, 2n-1)} \mbf{z}\in S_{0,0}\right\}\notag\\
    &z_2=1, z_{2n}=1:\notag\\
    &\left\{\tfrac{1}{\sqrt{2}}\left(\ket{0-}-\ket{1+}\right)^{(1,2n+1)}\ket{\phi_{\mbf{z}}}^{(3,5,\ldots, 2n-1)} \mbf{z}\in S_{0,0}\right\}\notag\\
    &z_2=0, z_{2n}=1:\notag\\
    &\left\{\tfrac{1}{\sqrt{2}}\left(\ket{0-}+\ket{1+}\right)^{(1,2n+1)}\ket{\phi_{\mbf{z}}}^{(3,5,\ldots, 2n-1)} \mbf{z}\in S_{0,1}\right\}\notag\\
    &z_2=1, z_{2n}=0:\notag\\
    &\left\{\tfrac{1}{\sqrt{2}}\left(\ket{0+}-\ket{1-}\right)^{(1,2n+1)}\ket{\phi_{\mbf{z}}}^{(3,5,\ldots, 2n-1)} \mbf{z}\in S_{0,1}\right\}.
\end{align}
The crucial observation is that the states in the first two sets can be written using only two product states for qubits $2$ and $3$, and similarly for the states in the second two sets:
\begin{align}
    \frac{1}{\sqrt{2}}\left(\ket{0+}+\ket{1-}\right)&=\frac{1}{\sqrt{2}}\left(e^{-i\pi/4}\ket{\tilde{+}\tilde{+}}+e^{i\pi/4}\ket{\tilde{-}\tilde{-}}\right)\notag\\
    \frac{1}{\sqrt{2}}\left(\ket{0-}-\ket{1+}\right)&=\frac{1}{\sqrt{2}}\left(e^{i\pi/4}\ket{\tilde{+}\tilde{+}}+e^{-i\pi/4}\ket{\tilde{-}\tilde{-}}\right)\notag\\
    \frac{1}{\sqrt{2}}\left(\ket{0-}+\ket{1+}\right)&=\frac{1}{\sqrt{2}}\left(e^{-i\pi/4}\ket{\tilde{+}\tilde{-}}+e^{i\pi/4}\ket{\tilde{-}\tilde{+}}\right)\notag\\
    \frac{1}{\sqrt{2}}\left(\ket{0-}-\ket{1+}\right)&=\frac{1}{\sqrt{2}}\left(e^{i\pi/4}\ket{\tilde{+}\tilde{-}}+e^{-i\pi/4}\ket{\tilde{-}\tilde{+}}\right).
\end{align}
Therefore, we conclude that the support of $\rho^{\overline{A}}$ is spanned by $2^n$ orthogonal product states
\begin{align}
    &\left\{\ket{\tilde{+}\tilde{+}}\ket{\phi_{\mbf{z}}},\;\ket{\tilde{-}\tilde{-}}\ket{\phi_{\mbf{z}}}\;\vert\;\mbf{z}\in S_{0,0}\right\}\notag\\
    \bigcup\; &\left\{\ket{\tilde{+}\tilde{-}}\ket{\phi_{\mbf{z}}},\;\ket{\tilde{-}\tilde{+}}\ket{\phi_{\mbf{z}}}\;\vert\;\mbf{z}\in S_{0,1}\right\}.
\end{align}
We next show that these are the only product states in the support of 
$\rho^{(\overline A)}$.

Suppose that $\ket{\Psi}$ is a product state in the support of $\rho^{(\overline A)}$.  Then we can find coefficients such that
\begin{align}
\label{Eq:Psi-product-expansion}
    \ket{\Psi}&=\sum_{\mbf{z}\in S_{0,0}}\left(a_{\mbf{z}}\ket{\tilde{+}\tilde{+}}\ket{\phi_{\mbf{z}}}+b_{\mbf{z}}\ket{\tilde{-}\tilde{-}}\ket{\phi_{\mbf{z}}}\right)\notag\\
    &+\sum_{\mbf{z}\in S_{0,1}}\left(c_{\mbf{z}}\ket{\tilde{+}\tilde{-}}\ket{\phi_{\mbf{z}}}+d_{\mbf{z}}\ket{\tilde{-}\tilde{+}}\ket{\phi_{\mbf{z}}}\right)\notag\\
    &=\ket{\tilde{+}\tilde{+}}\ket{\alpha}+\ket{\tilde{-}\tilde{-}}\ket{\beta}+\ket{\tilde{+}\tilde{-}}\ket{\gamma}+\ket{\tilde{-}\tilde{+}}\ket{\delta},
\end{align}
where $\ket{\alpha}=\sum_{\mbf{z}\in S_{0,0}}a_{\mbf{z}}\ket{\phi_{\mbf{z}}}$, $\ket{\beta}=\sum_{\mbf{z}\in S_{0,0}}b_{\mbf{z}}\ket{\phi_{\mbf{z}}}$, $\ket{\gamma}=\sum_{\mbf{z}\in S_{0,1}}c_{\mbf{z}}\ket{\phi_{\mbf{z}}}$, and $\ket{\delta}=\sum_{\mbf{z}\in S_{0,1}}d_{\mbf{z}}\ket{\phi_{\mbf{z}}}$.  By Eq. \eqref{Eq:inner-product-z}, the states $\{\ket{\alpha},\ket{\beta}\}$ are orthogonal to the states $\{\ket{\gamma},\ket{\delta}\}$.  

Suppose first that both $\ket{\alpha}$ and $\ket{\beta}$ are nonzero.  Then we can find a vector $\ket{v}$ in the linear span of $\{\ket{\alpha},\ket{\beta}\}$ that has nonzero overlap with both $\ket{\alpha}$ and $\ket{\beta}$.  Partially contracting both sides of Eq. \eqref{Eq:Psi-product-expansion} by $\ket{v}$ yields
\[(\bra{v})\ket{\Psi}=x\ket{\tilde{+}\tilde{+}}+y\ket{\tilde{-}\tilde{-}}\]
with $x,y\not=0$.  But since $\ket{\Psi}$ is a product state, it remains a product state under partial contraction, and so the RHS must be a product state.  However the only product states contained in the linear span of $\ket{\tilde{+}\tilde{+}}$ and $\ket{\tilde{-}\tilde{-}}$ are these themselves.  We thus have a contradiction, and so it is not possible for both $\ket{\alpha}$ and $\ket{\beta}$ to be nonzero.  A similar argument shows that both $\ket{\gamma}$ and $\ket{\delta} $ cannot be nonzero.  Hence, there are only four possible forms of $\ket{\Psi}$, each pairing an element in $\{\ket{\alpha},\ket{\beta}\}$ with an element in $\{\ket{\gamma},\ket{\delta}\}$.  For example, we have could have
\begin{align*}
    \ket{\Psi}&=\ket{\tilde{+}\tilde{+}}\ket{\alpha}+\ket{\tilde{+}\tilde{-}}\ket{\delta}=\ket{\tilde{+}}(\ket{\tilde{+}}\ket{\alpha}+\ket{\tilde{-}}\ket{\delta}),
\end{align*}
which is not a product state unless either $\ket{\alpha}$ or $\ket{\delta}$ are zero, since $\ip{\alpha}{\delta}=0$. A similar argument applies for the other three possible forms of $\ket{\Psi}$.  Therefore, any product state in $\ket{\Psi}$ must have the form $\ket{\tilde{+}\tilde{+}}\ket{\alpha}$, $\ket{\tilde{-}\tilde{-}}\ket{\beta}$, $\ket{\tilde{+}\tilde{-}}\ket{\gamma}$, or $\ket{\tilde{-}\tilde{-}}\ket{\delta}$, where $\ket{\alpha},\ket{\beta}$ are product states in the span of $\{\ket{\phi_{\mbf{z}}}\;|\;\mbf{z}\in S_{0,0}\}$ and $\ket{\gamma},\ket{\delta}$ are product states in the span of $\{\ket{\phi_{\mbf{z}}}\;|\;\mbf{z}\in S_{0,1}\}$.

Then, it finally remains to be shown that the only product states in the span of $\{\ket{\phi_{\mbf{z}}}\;|\;\mbf{z}\in S_{0,0}\}$ are the $\ket{\phi_{\mbf{z}}}$ themselves;  likewise, the only product states in the span of $\{\ket{\phi_{\mbf{z}}}\;|\;\mbf{z}\in S_{0,1}\}$ are the $\ket{\phi_{\mbf{z}}}$ themselves.

Suppose that $\ket{\varphi}=\sum_{\mbf{z}\in S_{0,0}}a_{\mbf{z}}\ket{\phi_{\mbf{z}}}$ is a product state.  If $n=2$, then there is only a single state $\ket{+}^{(3)}$ in this sum.  If $n=3$, then there are two terms,
\[\ket{\varphi}=a_0\ket{++}^{(3,5)}+a_1\ket{--}^{(3,5)},\]
which requires that either $a_0=0$ or $a_1=0$ in order for $\ket{\varphi}$ to be a product state.  Next we consider the case when $n>3$.  For any binary sequence $\mbf{w}=(w_8,w_{10},\ldots,w_{2n})$, define the $(n-3)$-qubit state
\begin{align}
    &\ket{\omega_{\mbf{w}}}^{(7,\ldots, 2n-1)}\nonumber \\&:=\bigotimes_{k=6,8,\ldots,2(n-1)}\left(\sigma_z^{(k+1)}\right)^{w_k\oplus z_{w+2}}\ket{+}^{(k+1)},
\end{align}
where $w_{2(n-1)}=0$.  By Eq. \eqref{Eq:inner-product-z}, it follows that for any $\ket{\phi_{\mbf{z}}}^{(3,5,\ldots, 2n-1)}$ with $\mbf{z}\in S_{0,0}$ we have the partial contractions
\begin{align}
    \ip{\omega_\mbf{w}}{\phi_{\mbf{z}}}=\begin{cases} \text{$\ket{++}^{(3,5)}$ or $\ket{--}^{(3,5)}$ if $w_6=0$}\\\text{ $\ket{+-}^{(3,5)}$ or $\ket{-+}^{(3,5)}$ if $w_6=1$}\end{cases}.
\end{align}
Therefore, $\ip{\omega_\mbf{w}}{\varphi}$ is either contained in the linear span of $\{\ket{++},\ket{--}\}$ or $\{\ket{+-},\ket{-+}\}$.  In both cases, there are no other product states in the respective spaces besides the given ones.  Thus, if $\ket{\varphi}=\sum_{\mbf{z}\in S_{0,0}}a_{\mbf{z}}\ket{\phi_{\mbf{z}}}$ is a product state, then it requires that one and only one of the $a_\mbf{z}$ be nonzero.  An analogous argument holds for the superposition states $\ket{\varphi}=\sum_{\mbf{z}\in S_{0,1}}a_{\mbf{z}}\ket{\phi_{\mbf{z}}}$.  This concludes the proof.
\end{proof}

Lemma \ref{Lem:product-states-support} provides a structural analysis of the ring state $\ket{R_{2n+1}}$ that we will use to lower bound the tensor rank of $\ket{R_{2n+1}}$.  To get to this we will also need one general fact about CP decompositions.

\begin{lemma}
\label{Lem:cpd-rank}
Suppose that
\begin{equation}
\label{Eq:cpd2}
    \ket{\psi} = \sum_{i=1}^R \mu_i\ket{\psi_i}^{(1)}\otimes \cdots \otimes \ket{\psi_i}^{(N)},
\end{equation}
is a CP decomposition of $\ket{\psi}$.  For any subset of parties $\overline{A}$, the states $\{\bigotimes_{c\in \overline{A}}\ket{\psi_i}^{(c)}\}_{i=1}^R$ contains the support of $\rho^{(\overline{A})}=\Tr_A\op{\psi}{\psi}$. 
Moreover, if $\rho^{(\overline{A})}$ has matrix rank $R$, then conversely the states $\{\bigotimes_{c\in \overline{A}}\ket{\psi_i}^{(c)}\}_{i=1}^R$ must belong to the support of $\rho^{(\overline{A})}$.
\end{lemma}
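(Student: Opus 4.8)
The plan is to unfold the CP decomposition along the bipartition $A|\overline{A}$ and read off both containments by elementary linear algebra. First I would regroup the tensor factors, writing $\ket{a_i}:=\bigotimes_{c\in A}\ket{\psi_i}^{(c)}$ and $\ket{b_i}:=\bigotimes_{c\in \overline{A}}\ket{\psi_i}^{(c)}$, so that \eqref{Eq:cpd2} reads $\ket{\psi}=\sum_{i=1}^R\mu_i\ket{a_i}^{(A)}\otimes\ket{b_i}^{(\overline{A})}$. Tracing out the $A$ systems then gives
\[\rho^{(\overline{A})}=\Tr_A\op{\psi}{\psi}=\sum_{i,j=1}^R\mu_i\overline{\mu_j}\,\ip{a_j}{a_i}\,\op{b_i}{b_j}.\]

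For the first claim, recall that the support of a Hermitian operator is exactly its image, i.e.\ the orthogonal complement of its kernel. For an arbitrary vector $\ket{v}$ on the $\overline{A}$ systems,
\[\rho^{(\overline{A})}\ket{v}=\sum_{i=1}^R\Bigl(\sum_{j=1}^R\mu_i\overline{\mu_j}\,\ip{a_j}{a_i}\,\ip{b_j}{v}\Bigr)\ket{b_i}\in\spn\{\ket{b_i}\}_{i=1}^R,\]
so $\Range(\rho^{(\overline{A})})\subseteq\spn\{\ket{b_i}\}_{i=1}^R$; equivalently, the product states $\bigotimes_{c\in\overline{A}}\ket{\psi_i}^{(c)}$ span a subspace that contains the support of $\rho^{(\overline{A})}$.

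For the converse, assume $\rank(\rho^{(\overline{A})})=R$, so that its support has dimension exactly $R$. The subspace on the right-hand side above is spanned by the $R$ vectors $\ket{b_i}$, hence has dimension at most $R$; together with the inclusion just established, the two subspaces must therefore coincide, and in particular $\{\ket{b_i}\}_{i=1}^R$ is a linearly independent set and a basis of $\Range(\rho^{(\overline{A})})$. Consequently every $\ket{b_i}=\bigotimes_{c\in\overline{A}}\ket{\psi_i}^{(c)}$ lies in the support of $\rho^{(\overline{A})}$, as claimed.

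There is no substantive obstacle here; the only point needing care is to keep straight that the \emph{support} of $\rho^{(\overline{A})}$ means its image (column space) — equivalently the span of its eigenvectors with nonzero eigenvalue — and to use the first-part inclusion in the correct direction when performing the dimension count in the second part.
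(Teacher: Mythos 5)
Your proof is correct and follows essentially the same route as the paper's: regroup the CP decomposition along the bipartition $A|\overline{A}$, observe that $\rho^{(\overline{A})}$ is a Gram-weighted sum of $\op{b_i}{b_j}$ so its range lies in the span of the $\ket{b_i}$, and then conclude equality of subspaces when the rank is $R$. The only cosmetic difference is that you close the converse with a direct dimension count (support of dimension $R$ contained in a span of $R$ vectors forces coincidence), whereas the paper phrases the same step via the rank of the Gram matrix $M$ in the factorization $\rho^{(\overline{A})}=UMU^{T}$; both are valid and equivalent.
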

\begin{proof}

Given Eq.~\eqref{Eq:cpd2}, the reduced density matrix of $\ket \psi$ on $\bar{A}$ is
\[\rho^{(\overline{A})}=
\sum_{i=1}^R \sum_{j=1}^R\underbrace{\mu_i\mu_j\prod_{k\in A}\langle \psi_i^{(k)}\ket{\psi_j^{(k)}}}_{M_{ij}} \prod_{c\in \overline{A}} \ket{\psi^{(c)}_i}\bra{\psi^{(c)}_j}
.\]
Hence, $\rho^{(\overline{A})}= UMU^T$, where
\[U =\begin{bmatrix}\bigotimes_{c\in \overline{A}}\ket{\psi^{(c)}_1} & \cdots &\bigotimes_{c\in \overline{A}}\ket{\psi^{(c)}_R}\end{bmatrix}.\]
Consequently, the support (column span) of this reduced density matrix is contained in $\spn(\{\bigotimes_{c\in \overline{A}}\ket{\psi^{(c)}_i}\}_{i=1}^R)$.
Further, if $\rank(\rho^{(\overline A)})=R$, the rank of $M$ is also $R$, and the column span of $\rho^{(\overline A)}$ is the same if as that of $U$.


\end{proof}

Now, we put everything together to obtain our desired lower bound.
\begin{theorem}
\label{Thm:rank-lb}
$\text{rank}(\ket{R_{2n+1}})>2^n$.
\end{theorem}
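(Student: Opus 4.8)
The plan is to rule out $\rank(\ket{R_{2n+1}})=2^n$ (for $n\geq 2$); combined with the known bound $\rank(\ket{R_{2n+1}})\geq 2^n$ this gives the strict inequality. Keep the partition $A=\{2,4,\dots,2n\}$, $\overline{A}=\{1,3,\dots,2n+1\}$. By \cref{Lem:basis-support} and the eigendecomposition preceding it, $\rho^{(\overline{A})}=\frac{1}{2^n}\sum_{\mbf{z}}\op{e_{\mbf{z}}}{e_{\mbf{z}}}$ with the $\ket{e_{\mbf{z}}}$ orthonormal, so $\rho^{(\overline{A})}$ has matrix rank exactly $2^n$ and its support is the $2^n$-dimensional span of the $\ket{e_{\mbf{z}}}$. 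Given any CP decomposition $\ket{R_{2n+1}}=\sum_{i=1}^R\mu_i\bigotimes_{j=1}^{2n+1}\ket{\psi_i^{(j)}}$, set $\ket{\eta_i}=\bigotimes_{c\in\overline{A}}\ket{\psi_i^{(c)}}$ and $\ket{\chi_i}=\bigotimes_{a\in A}\ket{\psi_i^{(a)}}$, which are product states on $\overline{A}$ and on $A$. By \cref{Lem:cpd-rank} the $\ket{\eta_i}$ span the $2^n$-dimensional support of $\rho^{(\overline{A})}$, so $R\geq 2^n$; it therefore remains to exclude $R=2^n$.

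Assume $R=2^n$. The second half of \cref{Lem:cpd-rank} (since $\rank(\rho^{(\overline{A})})=2^n=R$) forces every $\ket{\eta_i}$ into the support of $\rho^{(\overline{A})}$, so the $2^n$ linearly independent product states $\ket{\eta_i}$ form a basis of it. By \cref{Lem:product-states-support} the only product states in that support are, up to scalars, the $2^n$ states $\ket{\tilde{\pm}\tilde{\pm}}\ket{\phi_{\mbf{z}}}$ with $\mbf{z}\in S_{0,0}$ and $\ket{\tilde{\pm}\tilde{\mp}}\ket{\phi_{\mbf{z}}}$ with $\mbf{z}\in S_{0,1}$; these $2^n$ states, which I label $\ket{e'_1},\dots,\ket{e'_{2^n}}$, are moreover orthonormal (their qubit-$(1,2n+1)$ factors $\ket{\tilde{+}\tilde{+}},\ket{\tilde{-}\tilde{-}},\ket{\tilde{+}\tilde{-}},\ket{\tilde{-}\tilde{+}}$ are mutually orthonormal, and within each of $S_{0,0}$ and $S_{0,1}$ the factors $\ket{\phi_{\mbf{z}}}$ are orthonormal by \eqref{Eq:inner-product-z}). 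Hence there are a permutation $j$ of $\{1,\dots,2^n\}$ and nonzero scalars $\lambda_i$ with $\ket{\eta_i}=\lambda_i\ket{e'_{j(i)}}$.

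Now isolate the product state $\ket{e_\star}:=\ket{\tilde{+}}^{(1)}\otimes\ket{+}^{(3)}\otimes\dots\otimes\ket{+}^{(2n-1)}\otimes\ket{\tilde{+}}^{(2n+1)}$ (the qubits $3,5,\dots,2n-1$ in $\ket{+}$), i.e.\ $\ket{\tilde{+}\tilde{+}}\ket{\phi_{\mbf{0}}}$ for $\mbf{z}=\mbf{0}\in S_{0,0}$, and let $i_0$ satisfy $\ket{\eta_{i_0}}=\lambda_{i_0}\ket{e_\star}$. Contracting $\ket{R_{2n+1}}=\sum_i\mu_i\ket{\chi_i}\ket{\eta_i}$ against $\bra{e_\star}$ on $\overline{A}$ and using orthonormality of the $\ket{e'_j}$, all terms with $j(i)\neq j(i_0)$ vanish and $\langle e_\star\ket{R_{2n+1}}=\mu_{i_0}\lambda_{i_0}\ket{\chi_{i_0}}$; since $\ket{e_\star}$ lies in the support this vector is nonzero, so it is forced to be proportional to a product state on $A$. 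On the other hand, computing it directly from $\langle\mbf{x}\ket{R_{2n+1}}\propto(-1)^{\sum_{(a,b)\in E}x_ax_b}$ (immediate from \eqref{eq:graph_state}): summing each internal odd qubit $2k+1$ ($1\leq k\leq n-1$) against $\bra{+}$ gives a factor proportional to $[x_{2k}=x_{2k+2}]$, forcing $x_2=x_4=\dots=x_{2n}$, and then summing qubits $1$ and $2n+1$ against $\bra{\tilde{+}}$ collapses the chord term $x_1x_{2n+1}$ to leave
\[\langle e_\star\ket{R_{2n+1}}\ \propto\ (1-i)\,\ket{0}^{\otimes n}+(1+i)\,\ket{1}^{\otimes n}\qquad\text{on }A.\]
For $n\geq 2$ both coefficients are nonzero, so this is a GHZ-type state, entangled across every bipartition of $A$ and hence not proportional to any product state on $A$ — a contradiction. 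Therefore $R\neq 2^n$, and with $R\geq 2^n$ we conclude $\rank(\ket{R_{2n+1}})>2^n$.

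The main obstacle is conceptual rather than computational: one must realize that \cref{Lem:product-states-support} rigidly pins the $\overline{A}$-factors of \emph{any} minimal CP decomposition to a fixed orthonormal product basis, and then that the matching $A$-factor of at least one of those basis vectors is a genuinely entangled GHZ-type state — an obstruction the purely bipartite Schmidt-rank bounds cannot detect. The short amplitude computation is exactly where the asymmetry of the \emph{odd} ring enters, through the chord edge $(1,2n+1)$ joining two vertices of $\overline{A}$; dropping that edge would make $\langle e_\star\ket{R_{2n+1}}$ factor over $A$ and collapse the argument.
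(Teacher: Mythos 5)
Your proof is correct (for $n\geq 2$, which is the regime where the paper's own argument operates as well) and follows essentially the same route as the paper: the same partition $A=\{2,4,\ldots,2n\}$, the same rigidity argument combining Lemma~\ref{Lem:cpd-rank} and Lemma~\ref{Lem:product-states-support} to pin the $\overline{A}$-factors of a putative length-$2^n$ CP decomposition to the fixed orthonormal product basis, and the same contradiction that the matching $A$-factors would have to be entangled. The only difference is cosmetic: you contract against the single basis vector $\ket{\tilde{+}\tilde{+}}\ket{\phi_{\mbf{0}}}$ and compute the amplitude directly, whereas the paper inverts Eq.~\eqref{Eq:product-state-support} to read off all the $A$-side coefficients $\frac{1}{\sqrt{2^{n+1}}}\left(e^{-i\pi/4}\ket{\hat{\mbf{z}}}+e^{i\pi/4}\ket{\hat{\overline{\mbf{z}}}}\right)$ at once; your $(1-i)\ket{0}^{\otimes n}+(1+i)\ket{1}^{\otimes n}$ is exactly the $\mbf{z}=\mbf{0}$ instance of that expression.
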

\begin{proof}
Lemma \ref{Lem:cpd-rank} constructs a reduced density matrix from $\ket{R_{2n+1}}$ of rank $2^n$, hence $\text{rank}(\ket{R_{2n+1}})\geq 2^n$.
Now, suppose for sake of contradiction that $\text{rank}(\ket{R_{2n+1}})=2^n$.
Since $\rho^{(\overline{A})}$ has matrix rank $2^n$ for the subset $A=\{2,4,\ldots,2n\}$, Lemma \ref{Lem:cpd-rank} says that any CP decomposition of $\ket{R_{2n+1}}$ of minimal length must contain product states $\{\bigotimes_{c\in \overline{A}}\ket{\psi_i}^{(c)}\}_{i=1}^R$ belonging to the support of $\rho^{(\overline{A})}$.  However, Lemma \ref{Lem:product-states-support} then implies that these product states must be the ones given in \eqref{Eq:product-state-support}.  That is, we must be able to write
{\small
\begin{align}
    \ket{R_{2n+1}}&=\sum_{\mbf{z}\in S_{0,0}}\ket{A_{\mbf{z}}}^{(2,4,\ldots, 2n)}\ket{\tilde{+}\tilde{+}}^{(1,2n+1)}\ket{\phi_{\mbf{z}}}^{(3,5,\ldots, 2n-1)}\notag\\
    &+\sum_{\mbf{z}\in S_{0,0}}\ket{B_{\mbf{z}}}^{(2,4,\ldots, 2n)}\ket{\tilde{-}\tilde{-}}^{(1,2n+1)}\ket{\phi_{\mbf{z}}}^{(3,5,\ldots, 2n-1)}\notag\\
    &+\sum_{\mbf{z}\in S_{0,1}}\ket{C_{\mbf{z}}}^{(2,4,\ldots, 2n)}\ket{\tilde{+}\tilde{-}}^{(1,2n+1)}\ket{\phi_{\mbf{z}}}^{(3,5,\ldots, 2n-1)}\notag\\
    &+\sum_{\mbf{z}\in S_{0,1}}\ket{D_{\mbf{z}}}^{(2,4,\ldots, 2n)}\ket{\tilde{-}\tilde{+}}^{(1,2n+1)}\ket{\phi_{\mbf{z}}}^{(3,5,\ldots, 2n-1)}
    \label{Eq:ring-expansion}
\end{align}}\noindent
with $\ket{A_{\mbf{z}}}$, $\ket{B_{\mbf{z}}}$, $\ket{C_{\mbf{z}}}$, and $\ket{D_{\mbf{z}}}$ all being product states.

At the same time, from Lemma \ref{Lem:basis-support} and Eq. \eqref{Eq:graph-state-full}, we can express the ring state as
\begin{equation}
    \ket{R_{2n+1}}=\frac{1}{\sqrt{2^{n}}}\sum_{\mbf{z}\in\mbb{Z}_2^{n}}\ket{\hat{\mbf{z}}}^{(A)}\ket{e_{\mbf{z}}}^{(\overline{A})},
\end{equation}
where $\ket{\hat{\mbf{z}}}^{(A)}:=(-1)^{|\mbf{z}|}\ket{\mbf{z}}^{(A)}$.  By inverting the equalities in Eq. \eqref{Eq:product-state-support}, this can be written as
\begin{align}
    \ket{R_{2n+1}}
    &=\frac{1}{\sqrt{2^{n}}}\sum_{\mbf{z}\in S_{0,0}}\frac{e^{-i\pi/4}\ket{\hat{\mbf{z}}}+e^{i\pi/4}\ket{\hat{\overline{\mbf{z}}}}}{\sqrt{2}}\ket{\tilde{+}\tilde{+}}\ket{\phi_{\mbf{z}}}\notag\\
    &+\frac{1}{\sqrt{2^{n}}}\sum_{\mbf{z}\in S_{0,0}}\frac{e^{i\pi/4}\ket{\hat{\mbf{z}}}+e^{-i\pi/4}\ket{\hat{\overline{\mbf{z}}}}}{\sqrt{2}}\ket{\tilde{-}\tilde{-}}\ket{\phi_{\mbf{z}}}\notag\\
    &+\frac{1}{\sqrt{2^{n}}}\sum_{\mbf{z}\in S_{0,1}}\frac{e^{-i\pi/4}\ket{\hat{\mbf{z}}}+e^{i\pi/4}\ket{\hat{\overline{\mbf{z}}}}}{\sqrt{2}}\ket{\tilde{+}\tilde{-}}\ket{\phi_{\mbf{z}}}\notag\\
    &+\frac{1}{\sqrt{2^{n}}}\sum_{\mbf{z}\in S_{0,1}}\frac{e^{i\pi/4}\ket{\hat{\mbf{z}}}+e^{-i\pi/4}\ket{\hat{\overline{\mbf{z}}}}}{\sqrt{2}}\ket{\tilde{-}\tilde{+}}\ket{\phi_{\mbf{z}}}.\notag
\end{align}
Comparing this with Eq. \eqref{Eq:ring-expansion} shows that $\ket{A_{\mbf{z}}}=\frac{1}{\sqrt{2^{n+1}}}\left(e^{-i\pi/4}\ket{\hat{\mbf{z}}}+e^{i\pi/4}\ket{\hat{\overline{\mbf{z}}}}\right)$, $\ket{B_{\mbf{z}}}=\frac{1}{\sqrt{2^{n+1}}}\left(e^{i\pi/4}\ket{\hat{\mbf{z}}}+e^{-i\pi/4}\ket{\hat{\overline{\mbf{z}}}}\right)$, etc., which is a contradiction because these are not product states.

\end{proof}

\section{Entanglement Measures on General Graph States}

\subsection{Extensions of Bipartite Measures}
Here we evaluate the multipartite extensions of bipartite measures previously discussed.  Recall that the multipartite extension of some bipartite entanglement measure $E$ is defined as $\min_{A}E_A(\ket{\psi})$, where $E_A$ is the measure $E$ evaluated according to partition $A|\overline{A}$.  Surprisingly for graph states, the multipartite extensions of many standard bipartite entanglement measures are dichotomous: one of two values based on if the graph is connected. The GME concurrence, negativity, and geometric measure have been previously calculated for connected graphs \cite{Contreras-Tejada-2019a}.  Since these results were derived independently of this work, we provide in the appendix a self-contained and direct calculation of the following:
\begin{theorem}
\label{thm:GME}
    \begin{align}
         \cgme{\ket{G}}&=\begin{cases}
    0 & \text{if G is a disconnected graph} \\
    1 & \text{otherwise}
    \end{cases}\\
    \ngme{\ket{G}}&=\begin{cases}
    0 & \text{if G is a disconnected graph} \\
    \frac{1}{2} & \text{otherwise}
    \end{cases}\\
    \ggme{\ket{G}} &= \begin{cases}
    0 & \text{if G is a disconnected graph} \\
    \frac{1}{2} & \text{otherwise}
\end{cases}.
\end{align}
\end{theorem}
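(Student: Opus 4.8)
The plan is to treat the two branches of the dichotomy separately. The disconnected case is essentially immediate: if $G$ is disconnected, choose a bipartition $A|\overline{A}$ whose parts are unions of connected components, so that no edge crosses the cut. Then the controlled-$Z$ circuit in \eqref{eq:graph_state} factorizes across $A|\overline{A}$ and $\ket{G}=\ket{\alpha}^{(A)}\otimes\ket{\beta}^{(\overline{A})}$ is biseparable. For this cut $\rho_A$ is pure, its partial transpose has unit trace norm, and its largest Schmidt coefficient is $1$; since each of $\cgme{\cdot}$, $\ngme{\cdot}$, $\ggme{\cdot}$ is a minimum over bipartitions, all three vanish.

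For the connected case the whole argument rides on \cref{thm:rdm}. The point is that each of the three measures, restricted to a given cut $A|\overline{A}$, is a function of the Schmidt spectrum of $\ket{G}$ alone, and \cref{thm:rdm} determines that spectrum exactly: $\rho^{(\overline{A})}$ is maximally mixed over a subspace of dimension $2^{d}$ with $d=\rank(\Gamma_{A\overline{A}})$, so $\ket{G}$ is locally unitarily equivalent to a Schmidt-rank-$2^{d}$ maximally entangled state, with all $2^{d}$ Schmidt coefficients equal to $2^{-d/2}$. I would then read off, for the cut $A$,
\[
2\bigl(1-\Tr[\rho_A^2]\bigr)=2\bigl(1-2^{-d}\bigr),\qquad 1-\max_i\mu_i=1-2^{-d},
\]
and for the negativity use that the partial transpose of the density matrix of a rank-$r$ maximally entangled state equals $\tfrac{1}{r}$ times the swap operator, hence has $\tfrac{r(r+1)}{2}$ eigenvalues $+\tfrac{1}{r}$ and $\tfrac{r(r-1)}{2}$ eigenvalues $-\tfrac{1}{r}$, so $\norm{\rho^{T_A}}_1=r=2^{d}$ and $\tfrac12(\norm{\rho^{T_A}}_1-1)=\tfrac12(2^{d}-1)$.

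Each of $\sqrt{2(1-2^{-d})}$, $1-2^{-d}$, and $\tfrac12(2^{d}-1)$ is strictly increasing in $d$, so minimizing any of the three measures over bipartitions amounts to minimizing $d=\rank(\Gamma_{A\overline{A}})$ over all nontrivial $A$. Here connectivity enters: for a connected graph on at least two vertices, every nontrivial cut is crossed by at least one edge, so $\Gamma_{A\overline{A}}\neq0$ and $d\geq1$; and $d=1$ is attained by taking $A=\{v\}$ for any vertex $v$, since then $\Gamma_{A\overline{A}}$ is a single nonzero row. Plugging $d=1$ into the three expressions gives $\cgme{\ket{G}}=\sqrt{2\cdot\tfrac12}=1$, $\ngme{\ket{G}}=\tfrac12$, and $\ggme{\ket{G}}=\tfrac12$, which is the claim. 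I expect the only step requiring care to be the partial-transpose spectrum computation underlying the negativity; everything else is routine bookkeeping on top of \cref{thm:rdm} together with the elementary observation $\min_A\rank(\Gamma_{A\overline{A}})=1$ for connected graphs.
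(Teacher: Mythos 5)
Your proposal is correct and follows essentially the same route as the paper's appendix: both arguments rest entirely on \cref{thm:rdm} (the reduced state is maximally mixed over a subspace of dimension $2^{d}$ with $d=\rank(\Gamma_{A\overline{A}})$), evaluate each measure as an increasing function of $d$, and conclude by noting that $d=0$ is attainable exactly when $G$ is disconnected while $d=1$ is attained at any single-vertex cut of a connected graph. Your swap-operator computation of the partial-transpose spectrum is just a repackaging of the paper's explicit count of $\binom{2^{d}}{2}$ negative eigenvalues $-2^{-d}$, so there is nothing substantively different to compare.
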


\begin{figure}[t]
`   \includegraphics[width=0.9\columnwidth]{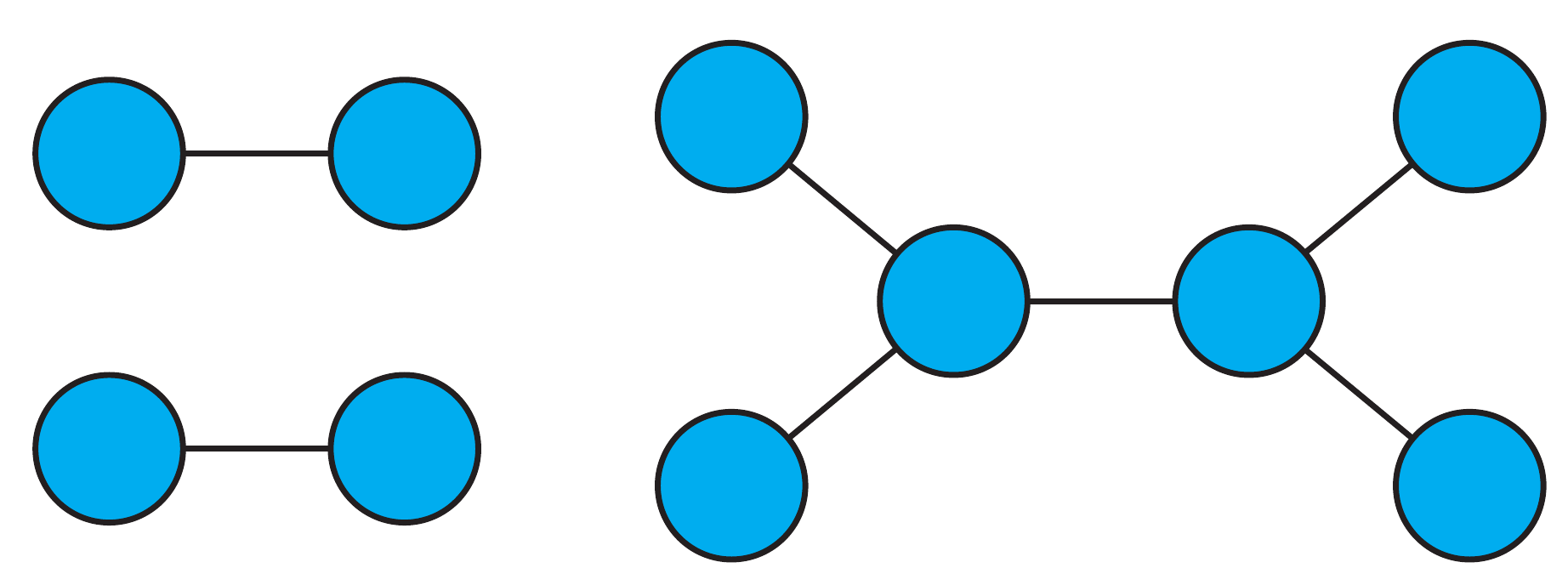}
    \caption{Graphs with $\tau_n=1$. All vertices in the graphs above have odd degree and thus have maximal n-tangle via Thm.~\ref{thm:n_tangle}.}
    \label{fig:my_label}
\end{figure}

\subsection{Graph State N-Tangle}

We present the following dichotomous result for $\tau_n(\ket{G})$:
\begin{theorem}\label{thm:n_tangle}
    For any graph state $\ket{G}$ on an even number of qubits,
    \begin{align}
        \tau_n(\ket{G}) = \begin{cases}
            1 & \forall v\in V \ \delta(v) = 1 \text{ (mod 2)}\\
            0 & \text{otherwise}
        \end{cases},
    \end{align}
    where $\delta(v)$ denotes the degree of vertex v.
\end{theorem}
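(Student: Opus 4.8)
The plan is to exploit the stabilizer description of $\ket{G}$ together with the fact that graph states have real amplitudes in the computational basis. First I would observe that, since every $U^{(a,b)}$ and every $\ket{+}$ appearing in \eqref{eq:graph_state} is real, the state $\ket{G}$ is real, so $\ket{G^*}=\ket{G}$ and hence
\[
\tau_n(\ket{G}) = \abs{\braket{G|\sigma_y^{\otimes n}|G}}^2 .
\]
This reduces the theorem to computing a single stabilizer expectation value. Writing $\mathcal{S}=\langle S_a : a\in V\rangle$ for the stabilizer group, we have $\ket{G}\bra{G}=\tfrac{1}{2^n}\sum_{s\in\mathcal{S}}s$, so for any Hermitian Pauli string $P$ one gets $\braket{G|P|G}=1$ if $P\in\mathcal{S}$, $-1$ if $-P\in\mathcal{S}$, and $0$ otherwise. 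In particular $\tau_n(\ket{G})\in\{0,1\}$, and it equals $1$ exactly when $\pm\sigma_y^{\otimes n}\in\mathcal{S}$.

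Next I would identify which subsets $T\subseteq V$ make $\prod_{a\in T}S_a$ proportional to $\sigma_y^{\otimes n}$. Since $S_a=\sigma_x^{(a)}\prod_{b\in N_a}\sigma_z^{(b)}$, multiplying out $\prod_{a\in T}S_a$ produces, on vertex $v$, the single-qubit operator $\sigma_z^{\abs{N_v\cap T}}$ if $v\notin T$ and $\sigma_x\sigma_z^{\abs{N_v\cap T}}$ if $v\in T$ (up to an overall sign from reordering factors). For this to be a scalar times $\sigma_y$ on every vertex we need $v\in T$ for all $v$, forcing $T=V$, and then $\abs{N_v\cap V}=\delta(v)$ must be odd for every $v$. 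Thus the only candidate element is $\prod_{a\in V}S_a$, and it works iff every vertex has odd degree. When all degrees are odd, $\prod_{a\in V}S_a$ is a product of pairwise-commuting Hermitian operators (hence Hermitian) that acts as a signed $\sigma_y$ on each qubit, so it equals $\pm\sigma_y^{\otimes n}$; since $\braket{G|\prod_{a\in V}S_a|G}=1$, we conclude $\braket{G|\sigma_y^{\otimes n}|G}=\pm1$ and $\tau_n(\ket{G})=1$. When some $\delta(v_0)$ is even, the same analysis shows $\pm\sigma_y^{\otimes n}\notin\mathcal{S}$, so $\braket{G|\sigma_y^{\otimes n}|G}=0$ and $\tau_n(\ket{G})=0$.

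The only real bookkeeping point — the mild ``obstacle'' — is phase tracking: a priori $\prod_{a\in V}S_a$ might be $\pm i\,\sigma_y^{\otimes n}$ rather than $\pm\sigma_y^{\otimes n}$, which would spuriously give an imaginary expectation value. This is handled by the Hermiticity remark above (a product of commuting Hermitian operators is Hermitian, so the factor of $i$ cannot occur); alternatively one can track it explicitly using $\sigma_x\sigma_z=-i\sigma_y$ per site together with the reordering sign $(-1)^{\abs{E}}$, both of which are real. With that taken care of, the dichotomy follows immediately.
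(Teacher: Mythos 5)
Your proof is correct and is essentially the same argument as the paper's: both reduce to the degree-parity condition via the stabilizer relation $\sigma_x^{(a)}\ket{G}=\prod_{b\in N_a}\sigma_z^{(b)}\ket{G}$, the paper by acting with $\sigma_y^{\otimes n}$ on $\ket{G}$ and checking whether the resulting graph basis state is $\ket{G}$ itself, you by checking whether $\pm\sigma_y^{\otimes n}$ lies in the stabilizer group --- two phrasings of the same computation. Your Hermiticity argument for ruling out a stray factor of $i$ is a slightly more careful treatment of the phase bookkeeping that the paper dismisses as an irrelevant global phase.
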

\begin{proof}
    As the components of the state vector $\ket{G}$ are all real, $\tau_n(\ket{G}) = |\braket{G|\sigma_y^{\otimes n}|G}|^2$. We employ the following $\sigma_x$ rule \cite{Griffiths}:
    \begin{align}
        \sigma_x^{(a)} \ket{G} = \prod_{b\in N_a}\sigma_z^{(b)}\ket{G}.
    \end{align}
    Thus, applying $\sigma_x^{(a)}$ maps the state to another graph basis state based on the graph's edges. Of course, $\sigma_y = -i\sigma_z\sigma_x$. Note that additional global phases may be picked up in the application of the $\sigma_x$ rule based on the graph basis state. However, global phases do not factor into the calculation of $\tau_n$ and are dropped below. Thus,
    \begin{align}
        \sigma_y^{\otimes n}\ket{G} \sim \prod_{a\in V}\sigma_z^{(a)}\prod_{b\in N_a}\sigma_z^{(b)}\ket{G}.
    \end{align}
    If there is a vertex $v$ of even degree, then $\sigma_z^{(v)}$ appears in an even number of the second products above in addition to once in the first product. Thus, $\ket{G}$ is mapped to a different graph basis state and $\lvert\braket{G|\sigma_y^{\otimes n}|G}\rvert=0$.
    
    If all vertices have odd degree, then an even number of $\sigma_z$ operations are applied to each party, and the state is an eigenvector of $\sigma_y^{\otimes n}$. Thus, $\lvert\braket{G|\sigma_y^{\otimes n}|G}\rvert=1$.
\end{proof}

\begin{figure}[t]
    \centering
    \includegraphics[width=0.5\columnwidth]{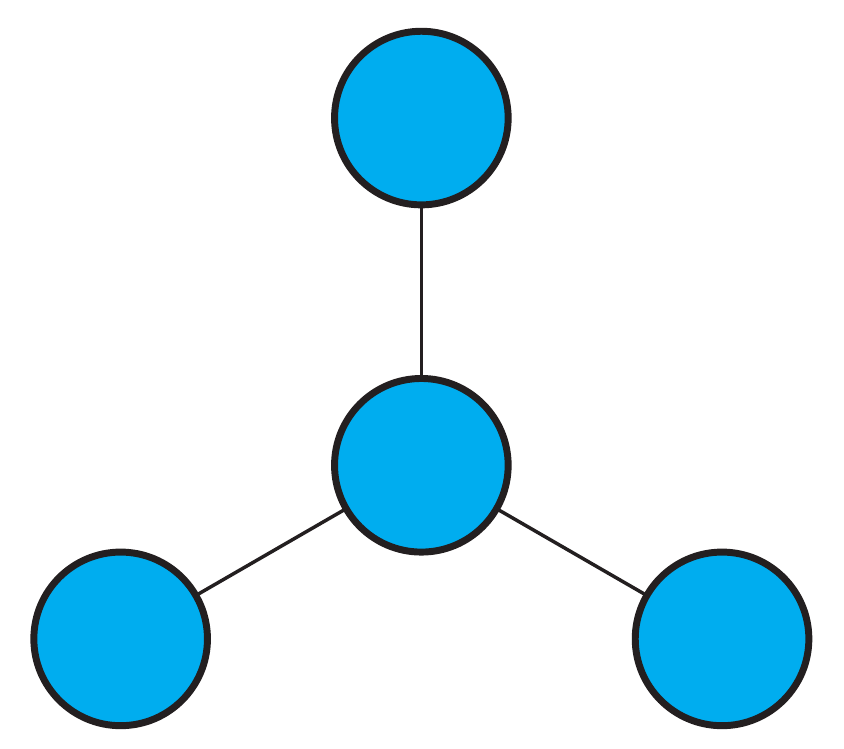}
    \caption{GHZ state graph. For any system size, the GHZ state is local Clifford equivalent to a star graph state. Through the local complementation rule, this is also local Clifford equivalent to the fully connected graph.}
    \label{fig:ghz}
\end{figure}

\begin{remark}
    There are $2^{\binom{n-1}{2}}$ graphs on $n$ (where $n$ is even) vertices such that $\tau_n(\ket{G}) = 1$. There are none for odd $n$.
\end{remark}
\begin{proof}
    We claim that for even $n$, given any graph on $n-1$ vertices we can construct a graph on $n$ of all odd degree. First recall the result from elementary graph theory that the number of vertices of odd degree must be even. As $n-1$ is odd, thus there are an odd number of vertices of even degree. Simply connect the new $n^{\text{th}}$ vertex to these originally even degree vertices. As the original graph can be recovered by removing the $n^{\text{th}}$ vertex, this is a bijection between graphs of $n-1$ vertices and all odd graphs on $n$. The claim follows from the number of graphs on $n-1$ labeled vertices.
    
    As there cannot be an odd number of vertices of odd degree, $\tau_n(\ket{G}) = 0$ for all graphs on an odd number of vertices. This is exactly what we would like to recover as it is known that n-tangle is only nonzero for even numbers of qubits \cite{Wong2001_ntangle}.
    

\end{proof}

We also note that this recovers the fact that $\tau_n(\ket{\text{GHZ}})=1$. A fully connected or star graph (Fig.~\ref{fig:ghz}) corresponds to a GHZ state (up to local unitary operations). As expected, all vertices have odd degree.

\begin{remark}
    The n-tangle $\tau_n$ of any stabilizer state can be efficiently computed via stabilizer computation.
\end{remark}
This is clear from stabilizer states taking the form $\ket{s}\bra{s} = \sum_i g_i$, where $\{g_i\}$ are the n stabilizers. The complex conjugate of the state can readily be found via transposing the stabilizers with respect to the computational basis, $\ket{s^*}\bra{s^*} = \sum_i T(g_i)$, which simply adds a sign of $-1$ to $\sigma_y$ terms and leaves all other stabilizer unchanged. Thus, one can continue by calculating $g' = \{g_i'| g_i'=\sigma_y^{\otimes n}g_i\}$ and checking if $\langle g\rangle =\langle g'\rangle$.


\section{Overview and Discussion}
In this work we tightened the bounds on CP rank of odd rings to $2^n+1 \leq \rank(\R{2n+1}) \leq 3\cdot 2^{n-1}$. This indicates that odd rings are, according to the Schmidt measure, more entangled than a line in the same number of qubits. Further, odd rings are thus not of particularly high rank. For $2n+1$ qubits, the maximum CP rank is known to be on the order of $2^{2n-1}$\cite{Sumi2013_2tensors}. Based on numerical CP decomposition, we suspect $\rank(\R{2n+1}) = 3\cdot 2^n$, but the question remains open.

Beyond CP rank, we considered several multipartite entanglement measures on graph states based on bipartite measures. Surprisingly, these prove dichotomous: either 0 if the graph is disconnected, or a fixed value irrespective of graph structure beyond connectivity.

\section{Acknowledgements}
We would like to thank Yuchen Pang for helpful discussions. We are very grateful to Julio de Vicente for carefully pointing out how their work in Ref. \cite{Contreras-Tejada-2019a} implies Theorem \ref{thm:GME}.  The authors acknowledge support from the NSF Quantum Leap Challenge Institute for Hybrid Quantum Architectures and Networks (NSF Award 2016136). Linjian Ma and Edgar Solomonik were also supported via the US NSF RAISE/TAQS program, award number 1839204.

\bibliography{main}

\onecolumngrid
\pagebreak
\normalsize
\section*{Appendix}
\section*{Minimal Decomposition of Line States}
We find it informative to give a recursive minimal CP decomposition for line states $\L{n}$. Note that this can be used to construct the $3\times 2^{n}$ term CP decomposition for $\R{2n+1}$.
\begin{remark}\label{rmk:line}
    A minimal decomposition for any $\L{n}$ can be found via a simple recursive method. Define the following 2 qubit states:
    \begin{align}
        \ket{a} = \ket{0+},\ \ket{b} = \ket{1-},\ \ket{c} = \ket{0-},\ \ket{d} = \ket{1+}.
    \end{align}
    Any line state can be written as some sum over tensor products of these states. I.e. $\L{n} = \frac{1}{2^{\lfloor \frac{n}{2} \rfloor}}\sum_{\ket{\psi}\in\mathcal{L}_n}\ket{\psi}$, where all terms $\ket{\psi} $ are some string such as $\ket{aaa\ldots a}$, $\ket{acb\ldots d}$, etc. By $\mathcal{L}_n$ we denote the set of such strings in the decomposition of an $n$ qubit line state. In particular, the line state on $2n$ qubits takes the form
    \begin{equation}
        \L{2n} = \frac{1}{2^{n/2}}[(\ket{a}+\ket{b})\sum_{\substack{\ket{\psi}\in\mathcal{L}_{2(n-1)}\\\ket{\psi}_1 = \ket{0}}}\ket{\psi}+(\ket{c}+\ket{d})\sum_{\substack{\ket{\phi}\in\mathcal{L}_{2(n-1)}\\\ket{\phi}_1 = \ket{1}}}\ket{\phi}].
    \end{equation}
     Here by $\ket{\psi}_1 = \ket{0}$ we mean that the first party in $\ket{\psi}$ is in state $\ket{0}$. The line state on $2n+1$ qubits takes the form
     \begin{equation}
         \L{2n+1} = \ket{+}P_0^{(1)}\L{2n}+\ket{-}P_1^{(1)}\L{2n}.
     \end{equation}
    The base case is $\L{2} = \frac{1}{\sqrt{2}}(\ket{a}+\ket{b})$.
\end{remark}
\begin{proof}
    We can write a line state as 
    \begin{align}
        \L{n} = \frac{1}{2^{\lfloor \frac{n}{2}} \rfloor}\sum_{x\in\mathbb{F}_2^n} c(x)\ket{x},\ c(x) = \Pi_{i=0}^{n-2}(-1)^{x_ix_{i+1}}.
    \end{align}
    This follows from the action of $U^{(i,i+1)}$. Thus, we are looking to find a decomposition that contains all binary strings with equal weight, but with signs based off of the number of consecutive ones. It is clear that $\ket{a}$ and $\ket{b}$ satisfy these requirements. From inspection it is clear that $\ket{c}$ and $\ket{d}$ satisfy this property when following a party in the state $\ket{1}$. Further, $\ket{a}/\ket{b}$ combined and $\ket{c}/\ket{d}$ combined generate every binary string when expanded in the computational basis. When $n$ is even, we simply append new characters such that the sign properties are maintained. When $n$ is odd we do the same but with an addition of $\ket{\pm}$ chosen such that the sign property is maintained.
    
    The recursive structure above doubles the number of terms in the decomposition between $\L{2n}$ ($\L{2n+1}$) and $\L{2(n+1)}$ ($\L{2(n+1)+1}$). As the base case is rank 2, these are rank $2^{\lfloor \frac{n}{2} \rfloor}$ decompositions, which is optimal. 
\end{proof}

\section*{Upper Bound for 7 Qubit Ring}
Here we explicitly construct the rank $3\times 2^2=12$ decomposition for $\R{7}$. From remark~\ref{rmk:line} the 7 qubit line state can be written as:
\begin{align}
    \L{7} & = \frac{1}{2\sqrt{2}}\ket{+}(\ket{aaa}+\ket{acb}+\ket{cba}+\ket{cdb})+\frac{1}{2\sqrt{2}}\ket{-}(\ket{baa}+\ket{bcb}+\ket{dba}+\ket{ddb}),
\end{align}
\begin{align}
    P_0^{(1)}P_1^{(7)}\L{7}&= \frac{1}{4\sqrt{2}}\ket{0}[\ket{aa0}-\ket{ac1}+\ket{cb0}-\ket{cd1}+\ket{ba0}-\ket{bc1}+\ket{db0}-\ket{dd1}]\ket{1}.
\end{align}
Following the proof in the main text, we can find the desired decomposition from that of $\R{5}$, $\ket{a^{(4)}} = \frac{1}{2\sqrt{2}}\ket{0}(\ket{+0+}+\ket{-1-})$
and $\ket{b^{(4)}} = \frac{1}{2\sqrt{2}}\ket{0}(\ket{+0-}+\ket{-1+})$,
\begin{align}
    P_1^{(7)}\ket{\phi^{(7)}}=&\frac{1}{\sqrt{2}}[\ket{a^{(4)}}\ket{0-}+\ket{b^{(4)}}\ket{1+}]\ket{1}\nonumber\\ = & \frac{1}{4}\ket{0}[(\ket{+0+}+\ket{-1-})\ket{0-}+(\ket{+0-}+\ket{-1+})\ket{1+}]\ket{1}.
\end{align}
It can be verified that these are the same states via expanding into the computational basis.
As $\R{7} = U^{(1,7)}\L{7} = (\sigma_z^{(2n+1)}+2P_0^{(1)}\otimes P_1^{(7)})\L{7}$, we can thus write $\R{7}$ in the following 12 term decomposition
\begin{align}
    \R{7} = & \frac{1}{2\sqrt{2}}\ket{+}(\ket{0+0+0+}+\ket{0+0-1-}+\ket{0-1-0+}+\ket{0-1+1-})+\nonumber\\
    &\frac{1}{2\sqrt{2}}\ket{-}(\ket{1-0+0+}+\ket{1-0-1-}+\ket{1+1-0+}+\ket{1+1+1-})+\nonumber\\
    &\frac{1}{2}\ket{0}[(\ket{+0+}+\ket{-1-})\ket{0-}+(\ket{+0-}+\ket{-1+})\ket{1+}]\ket{1}.
\end{align}

\section*{Calculation of GME Entanglement for Graph States}

To show Theorem \ref{thm:GME}, we first observe the following.

\begin{corollary}\label{thm:qubit_rdm}
    The reduced density matrix for any individual qubit party i that corresponding to vertex v is 
    \begin{align}
        \rho_i = \begin{cases}
            \frac{1}{2}\mathbb{I} & \delta(v) > 0\\
            \ket{+}\bra{+} & \delta(v) = 0
        \end{cases},
    \end{align}
    where $\delta(v)$ is the degree of vertex $v$. 
\end{corollary}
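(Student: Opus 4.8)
The plan is to obtain this as an immediate specialization of \cref{thm:rdm}. First I would set $\overline{A}=\{i\}$ and $A=V\setminus\{i\}$, so that $\rho_i=\rho^{(\overline{A})}=\Tr_A\op{G}{G}$. By \cref{thm:rdm}, $\rho_i$ is maximally mixed on a subspace of dimension $2^d$ with $d=\rank(\Gamma_{A\overline{A}})$, where $\Gamma_{A\overline{A}}$ is the submatrix of $\Gamma_G$ recording edges between $A$ and $\overline{A}$. Since $\overline{A}$ is a single vertex $v$, this submatrix is just the column of $\Gamma_G$ indexed by $v$, whose nonzero entries are exactly the neighbors of $v$. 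Hence $\rank(\Gamma_{A\overline{A}})=1$ if $v$ has at least one neighbor and $\rank(\Gamma_{A\overline{A}})=0$ if $v$ is isolated.

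In the case $\delta(v)>0$ we then have $d=1$, so $\rho_i$ is maximally mixed on a $2$-dimensional space, i.e. $\rho_i=\tfrac12\mathbb{I}$, which is the first branch of the claim. In the case $\delta(v)=0$ we have $d=0$, so $\rho_i$ is a pure state; to pin down \emph{which} pure state, I would note (directly from \eqref{eq:graph_state}) that an isolated vertex receives no controlled-$Z$ gate, so $\ket{G}=\ket{+}^{(i)}\otimes\ket{G-\{i\}}$ factorizes, giving $\rho_i=\ket{+}\bra{+}$. This matches the second branch and completes the argument.

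I do not anticipate any real obstacle here: \cref{thm:rdm} does all the heavy lifting, and the only thing it does not directly supply is the identity of the pure state in the degenerate $\delta(v)=0$ case, which is handled by the trivial tensor-factorization observation above. The one small point to state carefully is that a single column of a $0/1$ matrix has rank $1$ precisely when it is nonzero, i.e. precisely when $\delta(v)\ge 1$.
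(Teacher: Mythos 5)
Your argument is correct and is essentially the paper's own proof: both apply \cref{thm:rdm} with the single vertex $v$ as one side of the bipartition and observe that the relevant submatrix of $\Gamma$ is a single $0/1$ column, hence of rank $1$ exactly when $\delta(v)>0$ and rank $0$ otherwise. Your extra remark identifying the pure state as $\ket{+}\bra{+}$ via the tensor factorization of an isolated vertex is a small but welcome addition that the paper leaves implicit.
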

\begin{proof}
    This follows readily from Lem.~\ref{thm:rdm}. If $v$ is not an isolated vertex, there is at least one non-zero value in $\Gamma_{A\overline{A}}$, where $A=\{v\}$, and thus $\rank(\Gamma_{A\overline{A}})=1$. If $v$ is an isolated vertex, then $\rank(\Gamma_{A\overline{A}})=0$.
\end{proof}



With this corollary and Lem.~\ref{thm:rdm}, we now show that the measures previously introduced are either 0 or a fixed constant based on if the graph is connected.

\begin{theorem}\label{thm:gme_conc}
    \begin{align}\cgme{\ket{G}}=\begin{cases}
    0 & \text{if G is a disconnected graph} \\
    1 & \text{otherwise}
    \end{cases}.\end{align}
\end{theorem}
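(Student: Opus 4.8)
The plan is to pipe everything through the structural result in \cref{thm:rdm}: for any bipartition $A\mid\overline{A}$, the reduced density matrix $\rho_A$ of $\ket{G}$ is maximally mixed on a subspace of dimension $2^{d}$, where $d=\rank(\Gamma_{A\overline{A}})$. From this the purity is $\Tr[\rho_A^2]=2^{-d}$, so the bipartite concurrence across the cut is $E_A(\ket{G})=\sqrt{2(1-2^{-d})}$. This quantity vanishes precisely when $d=0$, i.e.\ precisely when no edge of $G$ crosses the cut $A\mid\overline{A}$; otherwise $d\geq 1$ and hence $E_A(\ket{G})\geq\sqrt{2(1-1/2)}=1$.

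If $G$ is disconnected, I would take $A$ to be the vertex set of one connected component. Then $\Gamma_{A\overline{A}}=0$, so $E_A(\ket{G})=0$, and therefore $\cgme{\ket{G}}=\min_A E_A(\ket{G})=0$.

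If $G$ is connected, I would combine a lower and an upper bound. Connectivity forces every bipartition into two nonempty parts to have at least one crossing edge, so $d\geq 1$ and thus $E_A(\ket{G})\geq 1$ for every $A$, giving $\cgme{\ket{G}}\geq 1$. For the matching upper bound, take $A=\{v\}$ for a single vertex $v$: since $G$ is connected on at least two vertices, $\delta(v)>0$, so by \cref{thm:qubit_rdm} we have $\rho_v=\tfrac12\mathbb{I}$, whence $E_{\{v\}}(\ket{G})=\sqrt{2(1-1/2)}=1$. Hence the minimum is attained and equals $1$.

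The argument is essentially immediate once \cref{thm:rdm} is available; the only points requiring a little care are restricting the minimization to bipartitions into two nonempty parts and noting that a connected graph on $\geq 2$ vertices has no isolated vertex, so I anticipate no genuine obstacle. The very same template — express $E_A$ in terms of $d=\rank(\Gamma_{A\overline{A}})$, then read off the disconnected ($d=0$) and connected ($d\geq 1$, with $d=1$ attained at a single vertex) cases — is what I would also use for the GME negativity and geometric measure in \cref{thm:GME}.
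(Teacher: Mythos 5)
Your proposal is correct and follows essentially the same route as the paper: both invoke Lemma~\ref{thm:rdm} to get purity $2^{-d}$ with $d=\rank(\Gamma_{A\overline{A}})$, handle the disconnected case by cutting along a component ($d=0$), and handle the connected case by noting $d\geq 1$ for every cut with equality attained at a single non-isolated vertex. The only difference is that you spell out the lower-bound/upper-bound structure slightly more explicitly than the paper does.
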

\begin{proof}
    From Lem.~\ref{thm:rdm} we know that any reduced density matrix is maximally mixed on a certain subspace of dimension $2^d = 2^{\rank(\Gamma_{A\overline{A}})}$. By finding $\max_A\Tr[\rho_A^2]$ we minimize GME-Concurrence. The purity of a $k$-dimensional maximally mixed state is $\frac{1}{k}$. If there is a disconnected component $A$, $\rank(\Gamma_{A\overline{A}})=0$ and $\cgme{\ket{G}}=0$. Otherwise, the maximal purity is $\frac{1}{2}$, which is achieved by considering any single vertex. Thus, $\cgme{\ket{G}}=1$.
\end{proof}

\begin{theorem}\label{thm:gme_neg}
    \begin{align}\ngme{\ket{G}}=\begin{cases}
    0 & \text{if G is a disconnected graph} \\
    \frac{1}{2} & \text{otherwise}
    \end{cases}.\end{align}
\end{theorem}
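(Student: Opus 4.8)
The plan is to reduce the minimization over bipartitions to a single‑parameter formula in $d=\rank(\Gamma_{A\overline{A}})$, exactly as in the proof of Theorem~\ref{thm:gme_conc}. First I would record the standard fact that for a bipartite pure state $\ket{\psi}\in\mathcal{H}_A\otimes\mathcal{H}_{\overline{A}}$ with Schmidt decomposition $\ket{\psi}=\sum_i\sqrt{\mu_i}\,\ket{u_i}\ket{v_i}$, the partial transpose $\rho^{T_A}$ has spectrum $\{\mu_i\}_i\cup\{\pm\sqrt{\mu_i\mu_j}\}_{i<j}$, so that $\norm{\rho^{T_A}}_1=\sum_i\mu_i+2\sum_{i<j}\sqrt{\mu_i\mu_j}=\bigl(\sum_i\sqrt{\mu_i}\bigr)^2$; hence the bipartite negativity along the cut $A|\overline{A}$ equals $\tfrac12\bigl(\bigl(\sum_i\sqrt{\mu_i}\bigr)^2-1\bigr)$.

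Next I would invoke Lemma~\ref{thm:rdm}: for the graph state $\ket{G}$ and any bipartition $A|\overline{A}$, the reduced state $\rho^{(\overline{A})}=\Tr_A\op{G}{G}$ is maximally mixed on a subspace of dimension $2^d$ with $d=\rank(\Gamma_{A\overline{A}})$. Consequently there are exactly $2^d$ nonzero Schmidt coefficients, all equal to $2^{-d}$, which gives $\sum_i\sqrt{\mu_i}=2^{d/2}$ and therefore a bipartite negativity of $\tfrac12(2^d-1)$ for the cut $A|\overline{A}$. So $\ngme{\ket{G}}=\min_A\tfrac12\bigl(2^{\rank(\Gamma_{A\overline{A}})}-1\bigr)$, and since $\tfrac12(2^d-1)$ is monotone increasing in $d$ it suffices to minimize $\rank(\Gamma_{A\overline{A}})$ over all bipartitions.

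This last step is the same combinatorial observation already used for the GME concurrence. If $G$ is disconnected, taking $A$ to be the vertex set of one connected component makes $\Gamma_{A\overline{A}}$ the zero matrix, so $d=0$ and $\ngme{\ket{G}}=0$. If $G$ is connected (on at least two vertices), then every nonempty proper subset $A$ has a crossing edge, so $\Gamma_{A\overline{A}}\neq 0$ and $d\geq1$; choosing $A=\{v\}$ for a single vertex $v$, which has positive degree by connectivity, makes $\Gamma_{A\overline{A}}$ a nonzero one‑row $\mathbb{F}_2$‑matrix and hence $d=1$ (equivalently, this follows from Corollary~\ref{thm:qubit_rdm}). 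The minimum is therefore attained at $d=1$, giving $\ngme{\ket{G}}=\tfrac12(2-1)=\tfrac12$.

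The only mildly delicate point --- and the step I would write out most carefully --- is the eigenvalue structure of the partial transpose of a pure bipartite state and the resulting closed form $\norm{\rho^{T_A}}_1=\bigl(\sum_i\sqrt{\mu_i}\bigr)^2$. Once that is in place, the flat Schmidt spectrum supplied by Lemma~\ref{thm:rdm} together with the connectivity argument from Theorem~\ref{thm:gme_conc} finishes the proof with no further work.
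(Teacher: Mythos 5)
Your proposal is correct and follows essentially the same route as the paper: both reduce the bipartite negativity along a cut $A|\overline{A}$ to the closed form $\tfrac12(2^d-1)$ with $d=\rank(\Gamma_{A\overline{A}})$ via the flat Schmidt spectrum from Lemma~\ref{thm:rdm} (the paper counts the $\binom{2^d}{2}$ negative eigenvalues $-2^{-d}$ directly, which is just the specialization of your general $\bigl(\sum_i\sqrt{\mu_i}\bigr)^2$ formula), and then both minimize $d$ over bipartitions using connectivity. No gaps.
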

\begin{proof}
    Before continuing, we note that negativity can be equivalently written as a summation of the absolute value of the negative eigenvalues of the partial transpose.
    
    \begin{align}
        \mathcal{N}(\rho^{AB}) & = \frac{1}{2}(\norm{\rho^{T_A}}_1-1)= \sum_{\lambda < 0}|\lambda|,
    \end{align}
    where $\lambda$ are the eigenvalues of $\rho^{T_A}$.
    
    Next, we use Lem.~\ref{thm:rdm} to write $\ket{G}$ in a Schmit decomposition $\ket{G}=2^{-d/2}\sum_{i=1}^{2^d}\ket{u_i}\ket{v_i}$, where $d=\rank(\Gamma_{A:\overline{A}})$. Thus, the partial transpose with respect to $\overline{A}$ is
    \begin{align}
        \mathbb{I}\otimes T(\dya{G}) = & 2^{-d}\sum_{i,j=1}^{2^d}\ket{u_i}\bra{u_j}\otimes\ket{v_j}\bra{v_i}.
    \end{align}
    This has negative eigenvalue $-2^{-d}$ with multiplicity $\binom{2^d}{2}$, corresponding to eigenvectors $\frac{1}{\sqrt{2}}(\ket{u_i}\ket{v_j}-\ket{u_j}\ket{v_i})$. Thus, the negativity according to partition $A|\overline{A}$ is
    \begin{align}
        \mathcal{N}(\rho^{A\overline{A}}) & = \binom{2^d}{2}2^{-d}
         = \frac{1}{2}(2^d-1).
    \end{align}
    Clearly this is increasing with $d$. Thus, the GME negativity will be minimized by a partition with the the smallest $\rank(\Gamma_{A\overline{A}})$. If $G$ is disconnected, there exists a partition $A$ such that $\rank(\Gamma_{A\overline{A}})=0$. Otherwise, $d=1$ for any partition into a single vertex, for which $\mathcal{N}(\rho^{A\overline{A}}) = \frac{1}{2}$.
\end{proof}

\begin{theorem}\label{thm:gme_geo}
    \begin{align}\ggme{\ket{G}} = \begin{cases}
    0 & \text{if G is a disconnected graph} \\
    \frac{1}{2} & \text{otherwise}
\end{cases}.\end{align}
\end{theorem}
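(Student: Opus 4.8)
The plan is to reduce the computation entirely to the flat Schmidt spectrum of graph states provided by Lemma~\ref{thm:rdm}. First I would note that for any bipartition $A|\overline{A}$, Lemma~\ref{thm:rdm} says $\rho^{(\overline{A})}$ is maximally mixed over a subspace of dimension $2^d$ with $d=\rank(\Gamma_{A\overline{A}})$; hence all $2^d$ Schmidt coefficients equal $2^{-d}$, so $\max_i\mu_i=2^{-d}$ and $1-\max_i\mu_i=1-2^{-d}$. The minimization defining $\ggme{\ket{G}}$ therefore collapses to choosing a bipartition that minimizes $d=\rank(\Gamma_{A\overline{A}})$, since $1-2^{-d}$ is strictly increasing in $d$.

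I would then split into the two cases exactly as in the negativity argument of Theorem~\ref{thm:gme_neg}. If $G$ is disconnected, take $A$ to be the vertex set of one connected component; then no edges cross the partition, so $\Gamma_{A\overline{A}}=0$, $d=0$, and $1-2^{-0}=0$, giving $\ggme{\ket{G}}=0$. If $G$ is connected, then every nontrivial bipartition has at least one crossing edge, so $d\ge 1$ for all $A$, whence $1-2^{-d}\ge\tfrac12$; and equality is attained by taking $A=\{v\}$ for any single vertex $v$, which has positive degree (by connectivity, assuming $|V|>1$), so $\Gamma_{A\overline{A}}$ is a nonzero single-row matrix of rank $1$ and $1-2^{-1}=\tfrac12$. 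Equivalently one may invoke Corollary~\ref{thm:qubit_rdm}, which gives $\rho_v=\tfrac12\mathbb{I}$ directly. This yields $\ggme{\ket{G}}=\tfrac12$.

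I do not anticipate a genuine obstacle here: the entire content is the observation that a maximally mixed $2^d$-dimensional reduced state has largest Schmidt coefficient $2^{-d}$, and that this quantity is monotone in the boundary rank $d$, so the optimal bipartition is the one minimizing $\rank(\Gamma_{A\overline{A}})$ — a single vertex when $G$ is connected, and a whole component when it is not. The argument runs in parallel with the proofs of Theorems~\ref{thm:gme_conc} and~\ref{thm:gme_neg}, differing only in which function of $d$ is being optimized.
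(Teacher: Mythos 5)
Your proof is correct and follows essentially the same route as the paper: both reduce the computation to Lemma~\ref{thm:rdm}, observing that the flat spectrum gives $\max_i\mu_i=2^{-d}$ with $d=\rank(\Gamma_{A\overline{A}})$, so the optimum is $d=0$ (a whole component) for disconnected graphs and $d=1$ (a single vertex) for connected ones. Your write-up is in fact slightly more careful on the disconnected case, where the paper cites the single-qubit corollary but the clean argument is the component bipartition you use.
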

\begin{proof}
    If $G$ is disconnected there is a partition with Schmidt coefficient 1 (Cor.~\ref{thm:qubit_rdm}). Otherwise, the largest Schmidt coefficient possible is always $\frac{1}{2}$ via Lem.~\ref{thm:rdm}.
\end{proof}

\end{document}